\def\RR{\mathbb{R}}    % Real field
\def\PP{\mathbb{P}}    % Probability
\def\EE{\mathbb{E}}    % Expectation
\def\proof{\noindent{\em Proof.}~}
\def\eproof{\mbox{\ }\hfill$\square$}
\newtheorem{theorem}{Theorem}
\newtheorem{lemma}{Lemma}
\newtheorem{proposition}{Proposition}
\newtheorem{corollary}{Corollary}
\newtheorem{assumption}{Assumption}
\newtheorem{definition}{Definition}
\newtheorem{example}{Example}
\date{}
\begin{document}
	\title{ A Distributionally Robust Random Utility Model }
	
	\author{Emerson Melo\thanks{Department of Economics,
			Indiana University, Bloomington, IN 47408, USA. Email: {\tt emelo@iu.edu}. }
		\and David Müller\thanks{Department of Mathematics, TU Chemnitz, 09126 Chemnitz, Germany. Email: {\tt david.mueller@mathematik.tu-chemnitz.de}} \and Ruben Schlotter\thanks{Hannover Re, Karl-Wiechert-Allee 50, 30625 Hannover, Germany. Email: {\tt ruben.schlotter@hannover-re.com}} }

	\date{\today}
	\maketitle
	
	\pagestyle{myheadings} \thispagestyle{plain} \markboth{ }{ }
	\begin{abstract}  This paper introduces the distributionally robust random utility model (DRO-RUM), which allows the preference shock (unobserved heterogeneity) distribution to be misspecified or unknown. We make three contributions using tools from the literature on robust optimization. First, by exploiting the notion of distributionally robust social surplus function, we show that the DRO-RUM endogenously generates a shock distribution that incorporates a correlation between the utilities of the different alternatives. Second,  we show that the gradient of the distributionally robust social surplus yields the choice probability vector. This result generalizes the celebrated William-Daly-Zachary theorem to environments where the shock distribution is unknown. Third, we show how the DRO-RUM allows us to nonparametrically identify the mean utility vector associated with choice market data. This result extends the demand inversion approach to environments where the shock distribution is unknown or misspecified. We carry out several numerical experiments comparing the performance of the DRO-RUM with the traditional multinomial logit and probit models.
		\end{abstract}
%Thus, all the effort is to provide conditions on the distribution of the random preference shock such that the choice probabilities are consistent with the random utility maximization hypothesis (McFadden $[60]$ )
%Under this additive specification, different distributional assumptions on the random preference term will generate different stochastic choice rules.
	%This paper studies no-regret  learning  in a broad class of environments where a decision maker (DM) has random preferences in a repeated choice setting. 
	%(trust, friendship, cultural differences, etc)
	%
	%\pagestyle{myheadings} \thispagestyle{plain} \markboth{
		%E. MELO }{ PRICING AND LEARNING   IN TRAFFIC NETWORKS }
	\vspace{3ex}
	\small
	{\bf Keywords:} Discrete choice, Random utility, Convex analysis, Distributionally robust optimization.
	
	{\emph{ JEL classification: C35, C61, D90. }  }
	\thispagestyle{empty}
	
	\newcommand{\spacing}[1]{\renewcommand{\baselinestretch}{#1}\large\normalsize}
	\textwidth      5.95in \textheight 600pt
	\spacing{1.1}
	\newpage
%	\section{Introduction}\label{s1} This paper introduces a robust model of information acquisition.
%\cite{Shapiro_1989}
\section{Introduction}\label{Intro}
The random utility model (RUM) introduced by \citet{Marschak1959}, \citet{BlockMarschak1959}, and \citet{BeckerGordonDegroot1963} has become the standard approach to model stochastic choice problems. The fundamental work of  \citet{McFadden1978, mcffadens1, mcf1} makes the RUM  an empirically tractable approach suitable for applications in many areas of applied microeconometric, including labor markets, industrial organization, health economics, transportation, and operations management. In particular, McFadden provides an economic foundation and econometric framework which connects observable to stochastic choice behavior. This latter feature makes the RUM suitable to deal with complex choice environments and welfare analysis (\citet{McFadden2001} and \citet{train_2009}).
\smallskip 

In a RUM  a  decision maker (DM) faces a discrete choice set of alternatives in which each option is associated with a \emph{random} utility. Then the DM chooses a particular option with a probability equal to the event that such alternative yields the highest utility among all available alternatives. Most of the applied literature models the random utility associated with each alternative as the sum of an \emph{observable} and \emph{deterministic} component and a \emph{random preference shock}. Under this additive specification, different distributional assumptions on the random preference shock will generate different stochastic choice rules. Thus, all the effort is to provide conditions on the distribution of the preference shock such that the choice probabilities are consistent with the random utility maximization hypothesis (\citet{mcf1}). 
\smallskip

More importantly, assuming that the shock distribution is known to the analyst, we can estimate the parameters describing the deterministic utility associated with each alternative, carry out counterfactual welfare analysis, and predict future choice behavior. From a modeling standpoint,  this assumption means that the analyst can \emph{correctly} specify the shock distribution that describes the \emph{unobserved} heterogeneity in DM's behavior. 

\smallskip
In this paper, we develop a RUM framework that allows for the possibility that the analyst (or the DM) does not know the true shock distribution. In doing so, we propose a distributional robust framework that relaxes the assumption that the shock distribution is known in advance. In particular, we develop a RUM  framework that allows for misspecification in the shock distribution. By modeling the uncertainty regarding the true distribution, we follow the distributionally robust optimization literature and consider an environment where the analyst has access to a reference distribution  $F$. This distribution corresponds to an approximation of the true statistical law generating the realizations of preference shocks. We refer to $F$ as the nominal distribution. Accordingly,  we model uncertainty distribution in terms of an \emph{uncertainty set}, which consists of all probability distributions that are \emph{close} to  $F$. We rely on the concept of  \emph{statistical} divergences to measure the distance between probability distributions. More precisely, we use the notion of $\phi$-divergences  (\cite{Csiszar1967, LieseVajda1987}). Examples of $\phi$-divergences include the Kullback-Leibler,  Renyi,  and Cressie-Read distances, among many others. Thus, the uncertainty set contains the nominal $F$ and all feasible distributions within a certain radius as measured by the $\phi$-divergence.  
\smallskip

Based on the uncertainty set, we introduce the robust social surplus function, corresponding to the maximum social surplus achievable over all feasible distributions. Like the traditional RUM, the robust social surplus is a convex function that contains all the relevant information to study and understand our distributionally robust RUM (DRO-RUM).

%\textcolor{red}{citation provides error, thats why I temporarly comment it out} 
%Our approach is based in the idea of defining a set of probability distributions that are ``close'' to the nominal distribution $F$, which is assumed to be known.   we develop a distributionally robust  random utility model (DRO-RUM), 

\subsection{Contributions} We make three contributions. First, we show that the analysis of the DRO-RUM corresponds to the study of the properties of a strictly convex finite dimensional stochastic optimization program. This characterization directly implies that the \emph{endogenous}  robust distribution associated with the DRO-RUM introduces correlation between the preference shocks, even when the nominal $F$ may assume independence.

\smallskip

Second, we show that the gradient of the robust social surplus function yields the choice probability vector. The latter result is a nontrivial generalization of the celebrated Williams-Daly-Zachary (WDZ) theorem to environments where the true shock distribution is unknown.   Furthermore, we show that the  DRO-RUM preserves the convex structure of the traditional RUM. In particular, we derive a robust Fenchel duality framework that connects the robust social surplus and its convex conjugate.

\smallskip

In our third contribution, we characterize the empirical content of the DRO-RUM. Formally, we show that for an observed choice probability vector, there exists a \emph{unique}  mean utility vector that rationalizes the observed data in terms of a DRO-RUM. In particular, we show that the mean utility vector corresponds to the gradient of the convex conjugate of the robust social surplus function. The economic content of our result comes from the fact the DRO-RUM  can rationalize observed behavior (a choice probability vector) in terms of a unique mean utility vector, which corresponds to the unique solution of a strictly convex stochastic programming problem.
\smallskip

To conclude our theoretical contributions, we carry out several numerical simulations discussing the properties of our framework. In particular, we compare the choice behavior of the  DRO-RUM with the multinomial logit (MNL) and multinomial probit MNP models. We mainly focus on the impact of the so-called \emph{robustness parameter}, which determines the size of the feasible set, impacting the choice probabilities and the surplus function.
\subsection{Related literature} 
Our paper is related to several strands of literature. First, our paper relates to the literature on RUMs and convex analysis. The closest articles to ours are the works by  \cite{Chiong_Galichon_Shum_2016}, \cite{Galichon_Salanie_2021},  and \cite{Fosgerauetal2021109911}. Similar to us,  these papers exploit the convex structure of the RUM to study the nonparametric identification of the mean utility vector when aggregate market data is available (observed choice probabilities). Our paper and results differ substantially from their work by allowing a more flexible framework regarding distributional assumptions.
 %Furthermore, from a technical standpoint our paper contributes to the literature on RUMs and convex analysis 

\smallskip

Second, our paper relates to the semiparametric choice model (SCM) literature. The work by \cite{Natarajan_et_al_2009}  introduces the SCM in an environment where the true \emph{joint} distribution is unknown, but the analyst has access to the set of marginal distributions associated with each alternative. This particular instance of the SCM is known as the marginal distribution model (MDM). \cite{Mishra_et_al_2014} studies the MDM approach's theoretical and empirical performance. \cite{Mishra_et_al_2012} study a second instance of the SCM, which exploits cross moments constraints. In particular, they assume that the true distribution is unknown but the analyst has access to the true variance-covariance matrix that captures the correlation structure across the set of discrete alternatives. 

\smallskip
%First, our definition of the uncertainty set requires the assumption that only the nominal distribution is known. 
At first glance, our approach is similar to SCM. As discussed by \citet{Feng_et_al_opre.2017.1602}, the latter are generally defined by a supremum over a set of distributions. We adapt to this definition by introducing the DRO-RUM, where the true distribution is unknown, and the analyst, therefore, considers all distributions in an uncertainty set.\\

Despite the similarity between the general SCM and our approach, both frameworks have important differences. First, our approach requires no assumption on the marginals or variance-covariance matrix. Instead, our model only requires knowledge of a nominal distribution. Second, using the concept of $\phi$-divergence enables the researcher to incorporate robustness, where she can control the uncertainty concerning the shock distribution by selecting the \emph{robustness parameter}. Hence, the feasible set is not determined explicitly by fixing some moments or marginal distributions but is rather implicitly constructed by choosing the nominal distribution and the magnitude of the  \emph{robustness parameter}.
Moreover, our approach can generate different models by allowing the choice of several $\phi$-divergence functions and different nominal distributions. Third, we show that the DRO-RUM preserves the convex structure (and duality) of the traditional RUM approach. In particular, we generalize the WDZ  and provide a robust Fenchel duality analysis. Fourth,  we identify the mean utility vector nonparametrically by exploiting our robust convex duality results. This result allows us to rationalize aggregate market data (choice probabilities) in terms of a DRO-RUM. In particular, our identification result corresponds to a robust demand inversion method.
\smallskip
%None of these results are discussed  in the SCM approach. 
Our paper is also related to the literature on robustness in macroeconomics (\cite{Hansen_Sargent_2001,Hansen_Sargent2008}). However, this literature focuses on recursive problems using the Kullback-Leibler distance.   The recent paper by \cite{Christensen_Connault_2022} introduces robustness ideas to analyze the sensitivity of counterfactuals to parametric assumptions about the distribution of latent variables in structural models. Their focus is different from the problem we study in this paper.
\smallskip
\renewcommand*{\thefootnote}{\arabic{footnote}}
\setcounter{footnote}{0}
Finally, our paper is closely related to the literature on distributionally robust optimization. \cite{Shapiro} and \cite{Kuhn_et_al_2019} provide an up-to-date treatment  of the subject.  Applications vary from inventory management to regularization in machine learning.\footnote{In Economics, one of the first papers studying robust optimization problems is \cite{Scarf1958}}However, to our knowledge, this literature has not studied the problem of the distributional robustness of the RUM.
\smallskip

The rest of the paper is organized as follows. Section \ref{sec:RUM_CV} reviews the traditional RUM approach and introduces the problem of robustness. Section \ref{s3:DRO-RUM-Model} presents the DRO-RUM model and discusses its main properties.  The empirical content of  the DRO-RUM approach is discussed in Section \ref{s4:empirical_content}. Section \ref{s5:numerical_experiments} contents several numerical experiments comparing the outcome of the DRO-RUM with respect to MNL and MNP. Section \ref{s6:final_remarks} concludes the paper by providing an overview of possible extensions.
\smallskip

\noindent{\textbf{Notation}}. Throughout the paper we use the following notation and definitions.
Let us denote $\bar{\RR} = \RR \cup\left\{-\infty, +\infty\right\}$ and consider extended real-valued functions 
\[
f: \mathbb{V} \to \bar{\mathbb{R}},
\]
where $\mathbb{V}$ is a finite  dimensional real vector space. Consequently, we denote by $\mathbb{V}^*$ its dual space consisting of all linear functionals. In particular, we often work with   subspaces of $\RR^n$. % with $\R^n$ being the space of $n$-dimensional column vectors $$x =
%\left(x^{(1)}, \dots , x^{(n)}\right)^T,$$
The set defined by 
\[
\mbox{dom} f = \left\{x \in \mathbb{V} : f(x) < +\infty \right\}
\]
is called the {\it (effective) domain} of $f$.  A function is said to be {\it proper} if it takes nowhere the value $-\infty$ and $\mbox{dom} f \neq \emptyset$.
For a proper function $f:\mathbb{V} \to \bar{\mathbb{R}}$ the set $\partial f(x)$ represents its {\it subdifferential} at $x \in \mbox{dom}f$, i.e.
	\[
	\partial f(x) =  \left\{ g \in \mathbb{V}^* : f(y) \ge f(x) + \langle g, y - x\rangle, \quad \mbox{for all} \quad y \in \mathbb{R}^n \right\},
	\] 
	where $g \in \mathbb{V}^* $ is said to be a {\it subgradient}.
	If the subdifferential set is a singleton, i.\,e. the subgradient $g$ is unique, we denote by  $\nabla f(x)$ the {\it gradient} of the function $f$ at $x \in \mbox{int}\left(\mbox{dom}f\right)$.   The {\it convex conjugate} of a proper function  $f:\mathbb{V} \to \bar{\RR}$ is 
	\[
	f^*(g) = \underset{x \in \mathbb{V}}{\sup}\left\{\left\langle x,g \right\rangle - f(x)\right\}, \quad g \in \mathbb{V}^*.
	\]
	 $\EE_F(\cdot)$ denotes the expectation operator with respect to  a distribution $F$.
	
\section{The Random Utility Model}\label{sec:RUM_CV}

Consider a decision maker (DM) making a utility-maximizing discrete choice among alternatives $j\in\mathcal{J}=\left\{0,1,\ldots, J\right\}$.
The utility of option $j$ is
\begin{equation}
	\tilde{u}_{j}=u_{j}+\varepsilon _{j},  \label{random utility}
\end{equation}
where $u=(u_{0},u_1,\ldots ,u_{J})^T$ is deterministic and $\varepsilon=(\varepsilon _{0},\varepsilon_1,\ldots ,\varepsilon _{J})^T$ is a vector of random utility
shocks. The alternative $0$ has the interpretation of an outside option. Following the discrete choice literature, we set $u_0=0$.
\smallskip

Following \cite{McFadden1978,mcf1}, the previous description corresponds to the classic additive random utility model (RUM).  Our presentation of the RUM framework here will emphasize convex-analytic properties.
\begin{assumption}\label{RUM_conditions} 
	The random vector $\varepsilon $ follows a distribution $F$ that  is absolutely continuous with finite means, independent of
	$u$, and fully supported on $\mathbb{R}^{J+1}$.
\end{assumption}

Assumption \ref{RUM_conditions}
leaves the distribution of $\varepsilon$ unspecified, thus allowing
for a wide range of choice probability systems far beyond the often-used logit model. The assumption allows arbitrary correlation between the $%
\varepsilon _{j}$'s may be important in
applications.  As a direct consequence of Assumption \ref{RUM_conditions}, the  DM's choice probabilities correspond to:
\begin{equation*}
	p_{j}(u)\equiv\mathbb{P}\left(u_{j}+\varepsilon
	_{j}=\max_{j^\prime\in\mathcal{J}}\left\{u_{j^\prime}+\varepsilon _{j^\prime}\right\}\right) ,\quad j=0,1,\ldots,J.
\end{equation*}

An important object in the RUM framework is the \emph{surplus function} of the
discrete choice model (so named by \cite{mcf1}). It  is given by
\begin{equation}\label{SSF}
	W\left(u\right)=\mathbb{E}_{F }\left[\max_{ j\in\mathcal{J}}\left\{u_{j}+\varepsilon _{j}\right\}\right].
\end{equation}

Under Assumption \ref{RUM_conditions}, $W$ is convex and differentiable
and the choice probability vector $p(u)$ coincides with the gradient of $W$%
\footnote{The convexity of $W$ follows from the convexity of the max function. Differentiability follows from the absolute continuity of $\varepsilon $. }:
%See \citet{Shi2018}, \citet{Chiong2019}, and \citet{Melo2019} for semiparametric econometric approaches based on these convex-analytic properties of discrete-choice models.}

\begin{equation*}
	\frac{\partial}{\partial u_{k}}W\left(u\right)=p_{k}(u)\,\,\,\mbox{for $k=0,1,\ldots,J$}
\end{equation*}%
or, using vector notation, $p\left(u\right)=\nabla W(u)$. The previous result is the celebrated Williams-Daly-Zachary (henceforth, WDZ) theorem, famous in the discrete choice literature %
(\cite{McFadden1978, mcf1}).
\smallskip

One of the most widely used RUMs is the multinomial logit (MNL) model, which assumes that the entries of $\left(\varepsilon_0, \varepsilon_1, \ldots, \varepsilon_J\right)^T$ follow iid Gumbel distributions with scale parameter $\eta$. Given this assumption,  we can write the social surplus function  in closed form: 
\begin{equation}\label{W_MNL}
	W(u)=\eta\log\left(\sum_{j=0}^Je^{u_j/\eta}\right)  + \eta \gamma,
\end{equation}
where $\gamma$ is the Euler-Mascheroni constant.
It follows from (\ref{W_MNL}) that the WDZ theorem implies that  $p_j(u)$ is given by:
\begin{equation}\label{MNL_Choice}
	\frac{\partial W\left(u\right)}{\partial u_{j}}={e^{u_j/\eta}\over \sum_{l=0}^Je^{u_l/\eta}}\quad\mbox{for $j\in\mathcal{J}.$}
\end{equation}

The MNL model belongs to a broader class of RUM models called generalized extreme value (GEV) models introduced by \citet{gev}. This class of models is defined via a generating function $G:\mathbb{R}^{J+1}_+ \rightarrow \mathbb{R}_+$, which has to satisfy the following properties:  
\begin{itemize}
	\item[(G1)] $G$ is homogeneous of degree $\frac{1}{\eta} > 0$.
	\item[(G2)] $G\left(x_0,x_{1}, \ldots, x_{j}, \ldots, x_{J}\right) \rightarrow \infty$ as $x_{j} \rightarrow \infty$, $j=0, 1, \ldots,J$.
	\item[(G3)] %any subset of indices $\left\{ i_1, \ldots, i_k\right\} \subset \{1,\ldots,n\}$ 
	For the partial derivatives of $G$ w.r.t. $k$ distinct variables it holds: 
	\[
	\frac{\partial^{k+1} G\left(x_{0},\ldots, x_{J}\right)}{\partial x_{j_0}, \partial x_{j_1} \cdots \partial x_{j_k}} \geq 0 \mbox{ if } {k+1} \mbox{ is odd}, \quad 
	\frac{\partial^{k+1} G\left(x_{0},\ldots, x_{J}\right)}{\partial x_{j_0},\partial x_{j_1} \cdots \partial x_{j_k}} \leq 0 \mbox{ if } {k+1} \mbox{ is even}.
	\]
\end{itemize}
\citet{gev, mcf1} show that  a function  $G$ satisfying conditions (G1)-(G3)  implies that the joint distribution of the random  vector $\varepsilon$   corresponds to the  following probability density function:
\[
f_\epsilon\left(y_0,y_{1}, \ldots, y_{J}\right) = \frac{\partial^{J+1} \exp\left(-G\left(e^{-y_{0}},\ldots, e^{-y_{J}}\right)\right)}{\partial y_{0} \cdots \partial y_{J}},
\]
An essential property of the GEV class is that the social surplus function corresponds to \citep{gev}
\[
W(u) = \eta \ln G\left(e^{u}\right) + \eta \gamma,
\]
where $\gamma$ is the Euler-Mascheroni constant.  From the WDZ theorem it follows that the choice probability of the $j$-th alternative corresponds to:
\begin{equation*}\label{eq:gev_choiceprob}
p_{j}(u) =\frac{\partial W(u)}{\partial u_{j}} = \eta \frac{\partial G\left(e^{u}\right)}{\partial e^{u_j}}\cdot \frac{e^{u_{j}}}{G\left(e^{u}\right)}\quad\forall j\in\mathcal{J}.
\end{equation*} 
It is easy to see that  the generating function  
\[
G(e^{u})= \sum_{j =0}^{J} e^{u_j/\eta}=1+\sum_{j =1}^{J} e^{u_j/\eta}
\]
leads to the MNL model.
\smallskip

The main advantage of the GEV class is its flexibility to capture complex patterns correlation across the random variables $\varepsilon_j$'s. Examples of this are the  Nested Logit (NL),  the Paired Combinatorial Logit (PCL), the Ordered GEV (OGEV),  and the Generalized Nested Logit (GNL) model, which are particular instances of the GEV family.

\subsection{A robust framework for   the RUM}\label{ss:spm}
A fundamental assumption in the RUM is that the shock distribution is known to the researcher (and the DM). This means that the distribution of  $\varepsilon$ is correctly specified. Our main goal in this paper is to relax this condition by allowing the distribution of $\varepsilon$ to be unknown. Instead, the distribution of $\varepsilon$ is an argument in an optimization problem that corresponds to the definition of the social surplus function. We formalize this idea by replacing expression  (\ref{SSF}) with the  \emph{ robust   social surplus} function:
\begin{equation}\label{SCM_11}
	W^{RO}(u)=\sup_{G\in \mathcal{M}(F)} \mathbb{E}_{G}\left[\max_{ j\in\mathcal{J}}\left\{u_{j}+\varepsilon _{j}\right\}\right],
\end{equation}
where $\mathcal{M}(F)$ is a set of probability distributions that are close to a predetermined distribution $F$ which satisfies Assumption \ref{RUM_conditions}. This distribution $F$ can be seen as a best guess or prior knowledge of the analyst regarding the joint distribution of error terms. We will refer to $F$ as nominal distribution. In order to be more robust against misspecification the analyst takes into account all possible distributions that are close to the nominal distribution. 
\smallskip
A key aspect of our approach is related to the structure of the set $\mathcal{M}(F)$. In Section \ref{s3:DRO-RUM-Model} we specify the   $\mathcal{M}(F)$ in terms of $\phi$-divergence functions, which enables us to use the notion of statistical divergences between probability distributions (\cite{Csiszar1967}, \cite{LieseVajda1987}, and \cite{Pardo2005}).  Hence,  we will refer to this as the \emph{distributionally robust}-RUM (DRO-RUM).  As we shall see, by doing this we are able to characterize the resulting DRO-RUM surplus function  in terms of a convex finite dimensional optimization program.   This characterization is key in studying the properties of the DRO-RUM approach.
%Hence,  we will refer to this as the \emph{distributionally robust}-RUM (DRO-RUM)
\smallskip

Let $G^\star$ denote the distribution (or a limit of a sequence of distributions) that attains the optimal value in (\ref{SCM_11}). The choice probability for alternative $j$ under this model is given by (provided that it is well defined):
\begin{equation}
	p_j^{RO}(u)=\PP_{G^\star}\left(j=\arg\max_{j^\prime\in\mathcal{J}}\{u_{j^\prime}+\varepsilon_{j^\prime}\}\right)
\end{equation}

From an economic standpoint,  we can interpret the program (\ref{SCM_11}) in two alternative ways. First, the \emph{robust}-RUM  considers a situation where a DM faces preference shocks but has some flexibility concerning the distribution generating those errors. \\ Second, an analyst might not be sure about the distribution of the random vector $\varepsilon$ but might consider a set of possible distributions instead. Thus, it is reasonable for the analyst to assume that the DM is rational and the shock distribution generating the social surplus corresponds to one of the elements in $\mathcal{M}$.

 \subsection{Connection with the semiparametric choice model}\label{Connection_with_SCM} It is worth pointing out that the definition of the RO-RUM is similar to the  \emph{semiparametric choice model} (SCM), which has been recently introduced in the operation research literature (\cite{Natarajan_et_al_2009}). The surplus functions are defined as the supremum over distributions in both model classes. By doing so,  the SCM can capture complex substitution patterns and correlation between the different alternatives in the choice set $\mathcal{J}$. \citet{Feng_et_al_opre.2017.1602} provide a detailed overview of several discrete choice models, where the authors refer to SCM as a supremum over a general set of distributions. Thus, the {\it robust-RUM} could be seen as an instance of a semi-parametric choice model. There are some existing instances of SCM in the literature. In their original paper, \citet{Natarajan_et_al_2009} restrict the feasible set to joint distributions with given information on the marginal distributions. This particular instance of the SCM  is known as the  {\it marginal distribution model} (MDM).\footnote{In MDM, the marginal distributions of the random vector $\varepsilon$ are fixed. Formally, we  write  $\varepsilon_{i} \sim F_i$, where $F_i$ is the marginal  distribution function of the $i$-th error, $i=1, \ldots,J$.  In this case, we define $\mathcal{M}\triangleq \textsc{Mar}=\{F: \varepsilon_i\sim F_i \quad\forall i\in \mathcal{J} \} $. } A second class of  SCMs exploits cross-moment constraints. In particular, \cite{Mishra_et_al_2012} study the \emph{cross-moment model} (CMM), which considers the set $\mathcal{M}$ to be the set of distributions consistent with a \emph{known} variance-covariance matrix.\footnote{Formally, the CMM considers the  set of distributions
	$\mathcal{M}(0,\Sigma)=\{G: \EE_F(\varepsilon)=0,\quad\EE_G(\varepsilon\varepsilon^\top)=\Sigma \}$. In the definition of $\mathcal{M}(0,\Sigma)$,  the variance-covariance matrix $\Sigma$ is assumed to be known. }
\smallskip

Despite the apparent similarity, our approach differs from the existing SCM in several aspects. As we shall see in the rest of the paper, our framework differs from the SCM in the specification of the set of distributions. In particular, in existing SCMs the analyst needs to construct	a feasible set explicitly, for instance, by fixing the marginal distributions. In contrast, in our robust approach, the analyst specifies the feasible set implicitly by determining the nominal distribution $F$ and by upper bounding the distance of other distributions to $F$. Hence,  the DRO-RUM approach does not require knowledge of the marginals of either variance-covariance matrix. In Section \ref{s3:DRO-RUM-Model}, we see that in the DRO-RUM, the researcher controls the distance by selecting the magnitude of a robustness parameter. Thus, our approach follows a rather different principle than existing SCMs.
\smallskip

Additionally,  we show that the DRO-RUM corresponds to the solution of a convex finite dimensional optimization problem. This latter fact allows us to extend the WDZ theorem to environments where the shock distribution is misspecified. Finally, Section \ref{s4:empirical_content} shows how the DRO-RUM enables us to recover the mean utility vector $u$.

 \section{A Distributionally Robust - RUM model}\label{s3:DRO-RUM-Model} 
In this section, we formally introduce the DRO-RUM approach. Following the distributionally robust optimization literature, we consider an environment where the researcher (or the DM) has access to a reference distribution  $F$, which may be an approximation (or estimate) of the true statistical law governing the realizations of $\varepsilon$. We refer to $F$ as the \emph{nominal} distribution. Then, we define a set of probability distributions that are \emph{close} to  $F$. We rely on statistical distances to formalize the notion of distance between probability distributions.

%$\phi$-divergence functions.
%Our approach is based in the idea of defining a set of probability distributions that are ``close'' to the nominal distribution $F$, which is assumed to be known.  
\subsection{$\phi$-divergences} We measure the distance between two probability distributions by the so-called $\phi$-divergence.\\ %(\cite{Csiszar1967}, \cite{LieseVajda1987}, and \cite{Pardo2005}). \\ 
Let $\phi: \mathbb{R} \rightarrow(-\infty,+\infty]$ be a proper closed convex function such that $\mbox{dom} \; \phi$ is an interval with endpoints $\alpha<\beta$, so, $\operatorname{int} \left( \mbox{dom} \; \phi \right) =(\alpha, \beta) $. Since $\phi$ is closed, we have $\lim _{t \rightarrow \alpha_{+}} \phi(t)=\phi(\alpha)$, if $\alpha$ is finite and  $\lim _{t \rightarrow \beta_{-}} \phi(t)=\phi(\beta)$, if $\beta$ is finite.
\smallskip

Throughout the paper we assume that  $\phi$ is nonnegative and  attains its minimum at the point  $1 \in \operatorname{int}  \left(\mbox{dom} \; \phi \right)$, i.e. $\phi(1) = 0$.  The class of such functions is denoted by $\Phi$.
\begin{definition}\label{Phi_Definition}
Given $\phi \in \Phi$, the $\phi$-divergence of the probability measure $G$ with respect to $F$ is
\begin{equation} \label{phi_divergence_definition}
D_{\phi}(G\| F)=\left\{\begin{array}{cl}
	\int_{\RR^{J+1}} \phi\left(\frac{g(\varepsilon)}{f(\varepsilon)}\right) f(\varepsilon)d\varepsilon & \text { if } \mathrm{G}\ll \mathrm{F} \\
	+\infty & \text { otherwise }
\end{array}\right.
\end{equation}
where $f$ and $g$ are the associated densities of $F$ and $G$ respectively.

\end{definition}

%is  when $\varphi(t)=t \log t-t+1$. In this case  the $D_\phi(G\| F)$ recovers the Kullback-Leibler relative entropy. When $\phi(t)=(\sqrt{t}-1)^2$, one obtains the so-called Hellinger distance. \textcolor{blue}{Insert more examples}

To avoid pathological cases, throughout the paper, we assume the following:
\begin{equation}\label{Assumption_phi}
	\phi(0)<\infty, 0 \cdot \phi\left(\frac{0}{0}\right) \equiv 0,0 \cdot \phi\left(\frac{s}{0}\right)=\lim _{\varepsilon \rightarrow 0} \varepsilon  \cdot \phi\left(\frac{s}{\varepsilon}\right)=s \lim _{t \rightarrow \infty} \frac{\phi(t)}{t}, \quad s>0 .
\end{equation}

If the measure $G$ is absolutely continuous w.r.t. $F$, i.\,e. $G\ll F$, the $\phi$-divergence can be conveniently  written as:
\begin{equation}\label{Phi_Def_Exp}
	D_\phi(G\| F)=\EE_{F}\left( \phi(L(\varepsilon))   \right),
\end{equation}
where $L(\varepsilon)\triangleq g(\varepsilon)/f(\varepsilon)$ is the likelihood ratio   between  the densities $g$ and $f$, also known as Radon-Nikodym derivative of the two measures.  Using  the expression (\ref{Phi_Def_Exp}) combined with the convexity of $\phi$, Jensen's inequality implies that 
\begin{equation}\label{Phi_Jensen}
	D_{\phi}(G\| F) \geq \phi\left(\EE_{F}\left(L(\varepsilon)\right)\right)=\phi(1)=0
	\end{equation}
with equality if $G=F$, so that $D_{\phi}(G\| F)$ is a measure of distance of $G$ from $F$.\footnote{We recall that  $1 \in \operatorname{int} \; \mbox{dom} \; \phi$ is the point where $\phi$ attains its minimum 0.}  Furthermore, the $\phi$-divergence functional is convex in both of its arguments. The following proposition summarizes these key properties.
\begin{proposition}\label{phi_properties}The $\phi$-divergence functional (\ref{phi_divergence_definition}) is well-defined and nonnegative. It is equal to zero if and only if $f_1(t)=f_2(t)$ a.e. Furthermore, $D_{\phi}$ is convex on each of its arguments.
	\end{proposition}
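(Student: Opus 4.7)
The plan is to handle the three claims in order: well-definedness and nonnegativity, the equality characterization, and convexity in each argument. Nothing exotic is required; the ingredients are Jensen's inequality, strict convexity of $\phi$ at the point $1$, and the joint convexity of the perspective transform.

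For well-definedness and nonnegativity, I would first note that the conventions in (\ref{Assumption_phi}) make the integrand $\phi(g/f)\,f$ pointwise unambiguous even on the $F$-null set where $f(\varepsilon)=0$ (using $0\cdot\phi(0/0)=0$ and $0\cdot\phi(s/0)=s\lim_{t\to\infty}\phi(t)/t$ for $s>0$). Since $\phi\ge 0$, the integrand is nonnegative, so the integral is well-defined as an element of $[0,+\infty]$; in the case $G\not\ll F$ the definition assigns $+\infty$. For nonnegativity when $G\ll F$, since $G$ is a probability measure, $\mathbb{E}_F[L]=\int g(\varepsilon)\,d\varepsilon=1$, and Jensen's inequality applied to the convex function $\phi$ gives $D_\phi(G\| F)=\mathbb{E}_F[\phi(L)]\ge \phi(\mathbb{E}_F[L])=\phi(1)=0$, which is precisely the derivation already sketched in (\ref{Phi_Jensen}).

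For the equality characterization, the direction $G=F\Rightarrow D_\phi=0$ is immediate since then $L\equiv 1$ $F$-a.e.\ and $\phi(1)=0$. The converse requires the (standard) additional hypothesis that $\phi$ is strictly convex at $1$, so that $1$ is the unique minimizer of $\phi$: under this hypothesis, $D_\phi(G\| F)=0$ together with $\phi\ge 0$ forces $\phi(L(\varepsilon))=0$ for $F$-almost every $\varepsilon$, hence $L\equiv 1$ $F$-a.e., i.e.\ $g=f$ almost everywhere. I expect this strict-convexity-at-$1$ assumption to be the main conceptual gap in the statement as given, and in the proof I would either invoke it explicitly or observe that the equality in Jensen's inequality characterizes concentration of $L$ on the minimizer set of $\phi$.

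For convexity in each argument, I would use the perspective function $\psi(s,t)=t\,\phi(s/t)$ extended to $t=0$ by the conventions in (\ref{Assumption_phi}). A classical theorem in convex analysis states that the perspective of a proper closed convex function is jointly convex in $(s,t)$. Writing $D_\phi(G\| F)=\int \psi(g(\varepsilon),f(\varepsilon))\,d\varepsilon$ and applying pointwise joint convexity to a convex combination of pairs $(g_i,f_i)$, then integrating, yields joint convexity of $D_\phi$ in $(G,F)$; freezing one argument gives convexity in the other. The genuinely nontrivial step here is convexity in $F$ for fixed $G$, since $g/f$ depends nonlinearly on $f$; it is precisely the perspective identity that reduces this to a linear combination inside $\psi$, making integration preserve convexity.
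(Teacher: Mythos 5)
Your argument is correct and complete, but it is worth noting that the paper itself does not actually prove this proposition: its ``proof'' consists of two citations to Ben-Tal and Teboulle (1987, Props.~1 and~2) for well-definedness/nonnegativity and for convexity, and it silently skips the ``equal to zero if and only if'' clause altogether. What you supply is essentially the content of those cited results, made self-contained: Jensen's inequality (already sketched in \eqref{Phi_Jensen}) for nonnegativity, and the joint convexity of the perspective $\psi(s,t)=t\,\phi(s/t)$ for convexity in each argument --- the latter being exactly the right tool, since convexity in $F$ for fixed $G$ is the nonobvious direction and the perspective identity is what linearizes it. Your most valuable observation is the one about the equality characterization: as stated, the class $\Phi$ only requires $\phi\ge 0$ with $\phi(1)=0$, so a $\phi$ vanishing on a neighborhood of $1$ would give $D_\phi(G\|F)=0$ for many $G\neq F$, and the ``only if'' direction genuinely needs $1$ to be the \emph{unique} zero of $\phi$ (strict convexity at $1$). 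The paper neither states this hypothesis nor proves that clause, so you have identified a real gap in the statement rather than in your own argument; also note the statement's $f_1,f_2$ should read $f,g$. In short, your route is more elementary and more honest about the hypotheses than the paper's citation-only proof, at the cost of a somewhat longer write-up.
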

\proof The proof that $D_{\phi}(G\| F)$  is well defined and nonnegative follows from \cite[Prop. 1]{BentalTeboulle1987}.  The convexity of $D_{\phi}$ follows from \cite[Prop. 2]{BentalTeboulle1987}.\eproof
\smallskip

In our analysis, a key element will be the convex conjugate of $\phi$. For $\phi \in \Phi$ its conjugate denoted by  $\phi^*$ is:
 \begin{equation}\label{Phi_convex_conjugate}
 	 \phi^*(s)=\sup _{t \in \mathbb{R}}\{s t-\phi(t)\}=\sup _{t \in \operatorname{dom} \phi}\{s t-\phi(t)\}=\sup _{t \in \operatorname{int} \operatorname{dom} \phi}\{s t-\phi(t)\}, 	
 \end{equation}
 where the last equality follows from \cite[Cor. 12.2.2]{Rockafellar1970}. The conjugate $\phi^*$ is a closed proper convex function, with int dom $\phi^*=(a, b)$, where
 $$
 a=\lim _{t \rightarrow-\infty} t^{-1} \phi(t) \in[-\infty,+\infty) ; b=\lim _{t \rightarrow+\infty} t^{-1} \phi(t) \in(-\infty,+\infty] .
 $$
 Moreover, since $\phi$ is convex and closed, we have for its bi-conjugate $\phi^{* *}=\phi$, (\cite{Rockafellar1970}).
 It is worth noting that using the fact that  $1$ is the minimizer of $\phi$ and it is in the interior of its domain, so $\phi'(1) = 0$ holds. In addition,  using the property that $\phi$ is convex and closed,  we have by Fenchel equality 
 $ y =\phi'(x)$ iff $x = \phi^{*'}(y)$. Applying this latter observation to $x=1$ and $y=0$ we obtain $\phi^{\ast\prime}(0)=1$.

\subsection{The DRO-RUM framework}  The main idea is to consider an environment where the analyst (or a DM) does not know the true distribution governing realizations of the shock vector $\varepsilon$. In this environment, the role of $F$ is an approximation or some best guess of the ``true'' unknown distribution. Recognizing this ambiguity or potential misspecification  of the distribution $F$, we  make use of the $\phi$-divergence to  define the \emph{uncertainty set}  $\mathcal{M}_{\phi}(F)$ as:
\begin{equation}\label{Uncertainty_Set}
	\mathcal{M}_{\phi}(F)=\{G\ll F: D_\phi(G||F)\leq \rho\},
\end{equation}

 Formally,  $\mathcal{M}_\phi(F)$  is the set of all probability measures $G$ that are absolutely continuous w.r.t $F$, whose distance from $F$, as measured by the $\phi$-divergence, is at most $\rho$. The hyperparameter  $\rho$ is the radius of $\mathcal{M}_{\phi}(F)$, which reflects how uncertain is the researcher  (or the DM) about the plausibility of $F$ being correct. Let us further elaborate on this interpretation. Following \cite{Hansen_Sargent_2001, Hansen_Sargent2008}, \cite{Shapiro2017}, and \cite{Kuhn_et_al_2019}, we interpret the set  (\ref{Uncertainty_Set}) as an environment in which the analyst (or the DM)  has some best guess $F$ of the true \emph{unknown} probability distribution, but does not fully trust it. For instance, the researcher may consider that the nominal distribution $F$ corresponds to the Gumbel distribution. In this case,  $\mathcal{M}_{\phi}(F)$  accounts for many other probability distributions $G$ to be feasible, where $\rho$ determines the size of the feasible set.  
 \smallskip
 
 Endowed with the set $\mathcal{M}_{\phi}(F)$, we can modify expression \eqref{SCM_11} to obtain a \emph{distributionally robust} surplus function. Thus, the surplus function of   the DRO-RUM corresponds to the following optimization problem: 
\begin{comment}
\begin{equation}\label{SCM_Robust_Formulation}
	W^{DRO}(u)\triangleq \sup_{G\ll F }\left\{\EE_{G}\left(  \max_{j\in\mathcal{J}}{u_j+\varepsilon_j} \right):  D_\phi(G||F)\leq \rho\right\}
	\end{equation}
where the hyperparameter  $\rho>0$ measures the distance between $F$ and $G$. Here 

%where  $G\ll F$ means that $G$ is absolutely continuous with respect to $F$ and $G$ and 
%\begin{equation}\label{Phi_divergence}
%	D_\phi(G||F)\triangleq \int \phi\left({dG\over dF}\right)dF
%\end{equation}
is the $\phi$-divergence between the distributions $F$ and $G$.  The notion of $\phi$-divergence was introduced  \cite{Csiszar1967}. For an in-depth discussion of $\phi$-divergences we refer the reader to the book by \cite{LieseVajda1987}.

In the definition of $D_\phi(G||F)$ the function $\phi:\RR\longrightarrow\bar{\RR}_+=\mathbb{R}_{+} \cup\{\infty\}$ is assumed to be  convex and lower semicontinuous satisfying $\phi(1)=0$ and $\phi(t)=+\infty$ for any $t<0$.

 The $D_\phi(G||F)$ is the statistical distance between $F$ and $G$. \textcolor{red}{Maybe we should recall the definition of the $\phi(G||F)$ divergence?}
\end{comment}
\begin{equation}\label{Robust_SCM}
W^{DRO}(u)=	\sup_{G\in \mathcal{M}_{\phi}(\mathcal{F})}\left\{ \mathbb{E}_{G}\left[\max_{ j\in\mathcal{J}}\left\{u_{j}+\varepsilon _{j}\right\}\right]\right\}
\end{equation}

 %=\int_{\RR^J}\phi\left({g(\varepsilon)\over f(\varepsilon)}\right)f(\varepsilon)d\varepsilon
%Following the robust optimization literature we denote $F$ as the nominal distribution.	
%Intuitively, the nominal distribution $F$ is an approximate model describing the law that governs shock realizations and  $\mathcal{M}_{\phi}(F)$ defines and uncertainty (ambiguity) set that considers all distributions $G$ \textcolor{brown}{whose $\phi$-divergence from  the nominal distribution is at most $\rho$.}
Some remarks are in order. First,  a fundamental aspect of program (\ref{Robust_SCM}) is the role 
of the parameter  $\rho$ which controls the size of  $\mathcal{M}_{\phi}(F)$. Because of this, we can interpret  $\rho$ as an  \emph{index of robustness}. More precisely, when $\rho = 0$ we get  $\mathcal{M}_{\phi}(F)=\{F\}$, which means that we recover the RUM under the distribution $F$.\footnote{We note that  $\rho=0$ implies that  $D_{\phi}(G\| F)=0$. Then by Proposition \ref{phi_properties}, we know that this latter equality holds if and only if $F=G$. }  On the other hand, when  $\rho\longrightarrow\infty$ the uncertainty set $\mathcal{M}_\phi(F)$ admits a much larger set of possible distributions, including those that may not satisfy  Assumption \ref{RUM_conditions}.\footnote{To see this, we note that when $\rho\longrightarrow \infty$ the $\phi$-divergence  is unbounded. This latter fact implies that the set $\mathcal{M}_{\phi}(F)$ consists of all distributions which are absolute continuous w.r.t. to $F$. As $F$ is fully supported, this only implies that the distributions in $\mathcal{M}_{\phi}(F)$   must be continuous but certainly not fully supported on $\RR^{J+1}$. In fact, $\mathcal{M}_{\phi}(F)$ may consist of distributions that are absolutely continuous w.r.t Lebesgue measure but without finite means. For instance, the Pareto distribution with shape parameter $\alpha=1$ is absolutely continuous but fails to have a finite mean. } 
The DRO-RUM aims to set $\rho$ to reflect the perceived uncertainty that the researcher  (or a DM) experiences about the distributional assumption for $\varepsilon$. 
%In this case, our definition of  $W^{DRO}(u)$ recovers the SCM (see discussion in Section \ref{Connection_with_SCM}).
\smallskip

The following lemma establishes some elementary properties of $W^{DRO}(u)$.

	\begin{lemma}\label{lem:surplus}
		For the DRO-RUM the surplus function $W^{DRO}(u)$ satisfies: 
		\begin{itemize}
			\item[(i)] $W^{DRO}(u + c \cdot e) = W^{DRO}(u) + c$ for all $c \in \RR, u \in \RR^J$.
			\item[(ii)] $W^{DRO}(u) \geq W^{DRO}(v)$ for all $u,v \in \mathbb{R}^J$ with $u \geq v$.
			\item[(iii)] $W^{DRO}(u) \geq \displaystyle\max_{j\in \mathcal{J}} u_i + \min_{j\in\mathcal{J}}\mathbb{E}_F\left[\varepsilon_i\right]$.
			\item[(iv)] $W^{DRO}(u)$ is convex in $u$.
		\end{itemize}
	\end{lemma}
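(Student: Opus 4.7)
The plan is to prove each of the four properties by reducing them to elementary properties of the inner objective $\EE_G[\max_j\{u_j+\varepsilon_j\}]$ and then propagating those properties through the supremum over $\mathcal{M}_\phi(F)$. The main structural observation I will use repeatedly is that the nominal distribution itself lies in $\mathcal{M}_\phi(F)$, since $D_\phi(F\|F)=\phi(1)=0\le\rho$, and that for any fixed $G$ the function $u\mapsto \EE_G[\max_j\{u_j+\varepsilon_j\}]$ inherits translation, monotonicity, and convexity directly from the inner max.

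For (i), for every realization of $\varepsilon$, $\max_j\{u_j+c+\varepsilon_j\}=c+\max_j\{u_j+\varepsilon_j\}$, so linearity of the expectation gives $\EE_G[\max_j\{u_j+c+\varepsilon_j\}]=c+\EE_G[\max_j\{u_j+\varepsilon_j\}]$ for every $G\in\mathcal{M}_\phi(F)$; taking the supremum commutes with the additive constant $c$ and yields the stated equality. For (ii), if $u\ge v$ componentwise, then $u_j+\varepsilon_j\ge v_j+\varepsilon_j$ for each $j$ and hence $\max_j\{u_j+\varepsilon_j\}\ge\max_j\{v_j+\varepsilon_j\}$ pointwise in $\varepsilon$; expectations under any common $G$ inherit the inequality, and the supremum over $G\in\mathcal{M}_\phi(F)$ preserves it.

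For (iii), I invoke $F\in\mathcal{M}_\phi(F)$ so that $W^{DRO}(u)\ge \EE_F[\max_j\{u_j+\varepsilon_j\}]$. Let $j^\star\in\arg\max_{j\in\mathcal{J}} u_j$. Bounding the max by the $j^\star$-th entry gives $\max_j\{u_j+\varepsilon_j\}\ge u_{j^\star}+\varepsilon_{j^\star}$, so
\[
W^{DRO}(u)\ge u_{j^\star}+\EE_F[\varepsilon_{j^\star}]=\max_{j\in\mathcal{J}} u_j+\EE_F[\varepsilon_{j^\star}]\ge \max_{j\in\mathcal{J}} u_j+\min_{j\in\mathcal{J}}\EE_F[\varepsilon_j],
\]
where the last step just uses that $\EE_F[\varepsilon_{j^\star}]$ is one of the values over which the minimum is taken. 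Assumption \ref{RUM_conditions} guarantees that the expectations appearing here are finite, so the bound is well defined.

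For (iv), I fix $\varepsilon$ and note that $u\mapsto\max_j\{u_j+\varepsilon_j\}$ is a pointwise maximum of affine functions of $u$, hence convex; integrating against any $G$ preserves convexity in $u$, so each $u\mapsto\EE_G[\max_j\{u_j+\varepsilon_j\}]$ is convex on $\RR^{J+1}$. Taking the pointwise supremum over $G\in\mathcal{M}_\phi(F)$ preserves convexity, yielding convexity of $W^{DRO}$. None of the four items poses a real obstacle; the only mildly delicate point is (iii), where I have to insert the specific index $j^\star=\arg\max_j u_j$ to bridge between $\max_j u_j$ and $\min_j\EE_F[\varepsilon_j]$, and verify that the relevant expectation is finite under Assumption \ref{RUM_conditions}.
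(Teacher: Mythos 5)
Your proof is correct and follows essentially the same route as the paper's: translation and monotonicity are propagated pointwise through the expectation and then through the supremum, (iii) uses $F\in\mathcal{M}_\phi(F)$ to drop to the nominal expectation and then bounds the max by a single coordinate, and (iv) is convexity of the inner max preserved by expectation and pointwise supremum. Your write-up of (iv) via ``a pointwise supremum of convex functions is convex'' is in fact slightly cleaner than the paper's displayed inequality, and your explicit verification that $D_\phi(F\|F)=\phi(1)=0\le\rho$ makes precise a step the paper leaves implicit.
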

	\begin{proof}
		\begin{itemize}
			\item [(i)] The definition provides
			\[
			W^{DRO}(u + c\cdot e) = \sup_{G\in \mathcal{M}_{\phi}(F)}\left\{ \mathbb{E}_{G}\left[\max_{ j\in \mathcal{J}}\left\{u_{j}+\varepsilon _{j} + c \right\}\right] \right\}
			\]
			Due to the linearity of the expectation, it holds
			\[
			c + \sup_{G\in \mathcal{M}_{\phi}(F)} \mathbb{E}_{G}\left[\max_{ j\in \mathcal{J}}\left\{u_{j}+\varepsilon _{j}  \right\}\right] = c + W^{DRO}(u).
			\]
			\item[(ii)] Take any $u,v \in \mathbb{R}^J$ with $u \geq v$. First we note that for any arbitrary feasible distribution $G \in \mathcal{M}_{\phi}(F)$ it holds   
			\[
			W^{DRO}(u) \geq  \mathbb{E}_{G}\left[\max_{ j\in \mathcal{J}}\left\{u_{j}+\varepsilon _{j}  \right\}\right] \overset{(*)}{\geq} \mathbb{E}_{G}\left[\max_{ j\in\mathcal{ J}}\left\{v_{j}+\varepsilon _{j}  \right\}\right],
			\]
			where $(*)$ holds due to the monotonicity of the expectation. Taking the supremum on the right-hand side, we conclude that $	W^{DRO}(u) \geq 	W^{DRO}(v)$.
			\item[(iii)] We deduce that for any $i\in \mathcal{J}$
			\[
			W^{DRO}(u) \geq  \mathbb{E}_{F}\left[\max_{ j\in \mathcal{J}}\left\{u_{j}+\varepsilon _{j}  \right\}\right] \geq \mathbb{E}_{F}\left[u_{i}+\varepsilon _{i}  \right] \geq u_i + \min_{j\in\mathcal{J}} \mathbb{E}_{F}\left[\varepsilon_j\right],
			\]
			which is finite due to Assumption \ref{RUM_conditions}.
	\item[(iv)]	 Let $\alpha\in [0,1]$ and let $u$ and $v$ two deterministic utility vectors.  For a fix distribution $G\in \mathcal{M}_\phi(F)$, Then, due to the convexity of the $\max\{\cdot\}$ operator,
	$$W^{DRO}(\alpha u+(1-\alpha)v)\leq\alpha \EE_{G}\left(\max_{j\in \mathcal{J}}\{u_j+\varepsilon_j\}\right)+(1-\alpha)\EE_{G}\left(\max_{j\in \mathcal{J}}\{v_j+\varepsilon_j\}\right).$$
In the right-hand side, taking the supremum with respect to $G$ over $\mathcal{M}_{\phi}(F)$, we get 
$$W^{DRO}(\alpha u+(1-\alpha)v)\leq \alpha W^{DRO}(u)+(1-\alpha)W^{DRO}(v).$$
Then the convexity of $W^{DRO}(u)$ follows.
	
		\end{itemize}
		
				\end{proof}
	
 %Thus $\rho$ can be seen as a risk-level parameter. 
The following result characterizes $W^{DRO}(u)$.
\begin{proposition}\label{RSCM_Characterization} Let Assumption \ref{RUM_conditions} hold and define the random variable $H(u,\varepsilon)\triangleq \max_{j\in \mathcal{J}}\{u_j+\varepsilon_j\}$. Then,  problem  \eqref{Robust_SCM} is  equivalent to solving the following finite-dimensional convex program:
	\begin{equation}\label{RSCM_Duality}
	W^{DRO}(u)=\inf _{\lambda \geq 0, \mu\in \RR}\left\{\lambda \rho+\mu+\lambda\mathbb{E}_{F}\left[ \phi^{*}\left({H(u,\varepsilon)-\mu\over\lambda}\right)
	\right]\right\},
	\end{equation}
where $\lambda$ is the Lagrange multiplier associated to the uncertainty set $\mathcal{M}_{\phi}(F)$ and $\mu$ the multiplier associated to $G$ being a probabality measure.  Furthermore, the program (\ref{RSCM_Duality}) is convex in $\mu$ and $\lambda$.
	%where $Z\triangleq \max_{j\in J}\{u_j-\alpha(p)+\varepsilon_j\}$.
\end{proposition}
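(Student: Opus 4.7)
The plan is to rewrite the infinite-dimensional supremum over probability measures as an optimization over Radon--Nikodym derivatives against $F$, and then apply Lagrangian duality (a Csisz\'ar / Ben-Tal--Teboulle type argument) to obtain the stated finite-dimensional dual.

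First, since every feasible $G$ satisfies $G\ll F$, I would substitute $L(\varepsilon) = dG/dF$ and reformulate \eqref{Robust_SCM} as
\[
W^{DRO}(u)=\sup_{L\ge 0}\left\{\mathbb{E}_F[L(\varepsilon)\, H(u,\varepsilon)] \;:\; \mathbb{E}_F[\phi(L(\varepsilon))] \le \rho,\;\mathbb{E}_F[L(\varepsilon)] = 1 \right\}.
\]
This is a concave (linear) maximization over a convex feasible set in $L$, with one inequality constraint (the $\phi$-divergence bound) and one equality constraint (the normalization). I would attach nonnegative multiplier $\lambda$ to the divergence constraint and a real multiplier $\mu$ to the normalization, giving the Lagrangian
\[
\mathcal{L}(L,\lambda,\mu)=\lambda\rho+\mu+\mathbb{E}_F\!\left[L(\varepsilon)(H(u,\varepsilon)-\mu)-\lambda\,\phi(L(\varepsilon))\right].
\]

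Next, I would carry out the inner maximization in $L\ge 0$ pointwise. For $\lambda>0$, using the definition \eqref{Phi_convex_conjugate} of the conjugate $\phi^*$ together with the convention \eqref{Assumption_phi}, one obtains
\[
\sup_{L\ge 0}\left\{L\,(H(u,\varepsilon)-\mu)-\lambda\,\phi(L)\right\}=\lambda\,\phi^{*}\!\left(\frac{H(u,\varepsilon)-\mu}{\lambda}\right),
\]
where the $L\ge 0$ restriction is automatically encoded because $\operatorname{dom}\phi\subseteq[0,\infty)$ under our assumptions (and the extended-value definition forces $\phi(L)=+\infty$ for $L<0$). The boundary case $\lambda=0$ can be treated via the recession function of $\phi^{*}$ using the limits that define $\operatorname{int}\operatorname{dom}\phi^{*}=(a,b)$; by interpreting $\lambda\,\phi^{*}((H-\mu)/\lambda)$ as the perspective of $\phi^{*}$ at $\lambda=0$, the expression extends by continuity and remains the correct value. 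Exchanging expectation and pointwise supremum is legitimate here by standard interchange results for integral functionals (e.g. Rockafellar--Wets) since $\phi$ is a normal convex integrand and $H(u,\cdot)$ is measurable with finite $F$-expectation under Assumption~\ref{RUM_conditions}. Substituting back yields precisely the dual objective in \eqref{RSCM_Duality}.

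The key step is then justifying strong duality, i.e. that the infimum of the dual equals the supremum of the primal. For this I would invoke a Slater-type constraint qualification: the nominal distribution $F$ itself (i.e. $L\equiv 1$) is strictly feasible since $D_\phi(F\|F)=\phi(1)=0<\rho$ (as $\rho>0$) and $\mathbb{E}_F[1]=1$, so the constraint on the divergence is strictly satisfied. Combined with the convexity of the primal and the properness of $\phi$, this yields zero duality gap by the standard Fenchel / Rockafellar theorem for convex integral functionals with a linear equality constraint (this is the main technical obstacle; care is needed because the primal lives in an infinite-dimensional space, so one must cite or reproduce the appropriate version, e.g. Ben-Tal--Teboulle 1987 or Shapiro's DRO duality results). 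Finally, convexity of the dual objective in $(\lambda,\mu)$ follows because $\lambda\rho+\mu$ is linear and $(\lambda,\mu)\mapsto \lambda\,\phi^{*}((H-\mu)/\lambda)$ is the perspective of the convex function $\phi^{*}$ composed with the affine map $(\lambda,\mu)\mapsto(\lambda,H-\mu)$, hence jointly convex on $\{\lambda>0\}$; taking expectation preserves convexity. This completes the plan.
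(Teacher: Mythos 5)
Your proposal is correct and follows essentially the same route as the paper's proof: reformulation over the likelihood ratio $L=dG/dF$, Lagrangian duality with multipliers $\lambda$ and $\mu$, the interchangeability principle to move the supremum inside the integral, Slater's condition (via $L\equiv 1$) for zero duality gap, and joint convexity of the perspective $\lambda\,\phi^*(y/\lambda)$. The only cosmetic differences are your explicit treatment of the boundary case $\lambda=0$ via the recession function and your citation of Rockafellar--Wets rather than the paper's references to Shapiro et al.\ and Rockafellar (1976).
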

%\textcolor{red}{I think $\lambda c$ should be $\lambda \rho$? Yes, corrected}
%\proof The proof of this result follows from \cite{BenTaletal_2013} and \cite{Shapiro2017} [Need to add the details for completeness].\eproof

%\textcolor{red}{Can we say something about the differentiability w.r.t. to u? Refs might be interested in the derivation of the choice probabilities of the new SPM.}  \textcolor{blue}{I think is differentiable..at least  subdifferentiable....}\\

%\textcolor{blue}{EM: maybe a bit silly comment, but can we use the characterization in Proposition \ref{RSCM_Characterization} to show that $W^{rscm}(u)$ is choice welfare function as in \cite[Def. 1]{Feng_et_al_opre.2017.1602}?}\\
\proof This result follows from a direct application of \cite[Prop. 7.9]{Shapiro_et_al_2013}. For completeness, we provide the details of the argument. First, we note that for a fixed utility vector $u$ and using  the likelihood ratio $L(\varepsilon)\triangleq d G(\varepsilon) / d F(\varepsilon)$, the DRO-RUM  in  (\ref{Robust_SCM}) can be expressed as: 
\begin{comment}
\begin{equation}\label{RSCM_Program1}
	\sup_{G\ll F} \left.\left\{\mathbb{E}_{G}[H(u,\varepsilon )]: D_{\phi}\left(G \| F\right) \leq \rho\right\}\right\}
\end{equation}
where  the $\phi$-divergence between $F$ and $G$ is given by:
$$D_{\phi}\left(G \| F\right)\triangleq\int \phi\left(\frac{d G}{d F}\right) d \mu,$$
where $\phi: \mathbb{R} \rightarrow \overline{\mathbb{R}}_{+}=\mathbb{R}_{+} \cup\{\infty\}$ is a convex lower semicontinuous function satisfying $\phi(1)=0$ and $\phi(t)=+\infty$ for any $t<0$.
\smallskip
\end{comment}

\begin{eqnarray}
W^{DRO}(u)&=&\sup_{G}\{ \EE_{G}(H(u,\varepsilon)): G\in \mathcal{M}_{\phi}(F) \}\nonumber\\
	&=& \sup_{L\geq 0}\left\{\mathbb{E}_{F}[L(\varepsilon) H(u;\varepsilon )] \mid \mathbb{E}_{F}[\phi(L(\varepsilon))] \leq \rho, \mathbb{E}_{F}[L(\varepsilon)]=1\right\}\label{RSCM_Program_11}
\end{eqnarray}
where the supremum is over a set of measurable functions. 
\smallskip

 The Lagrangian of problem (\ref{RSCM_Program_11}) is :
 \begin{equation}\label{RSCM_Lagrangian_11}
 	\mathcal{L}(L, \lambda, \mu)=\int_{\RR^{J+1}}[L(\varepsilon)H(u,\varepsilon) -\lambda \phi(L(\varepsilon))-\mu L(\varepsilon)] d F(\varepsilon)+\lambda \rho+\mu .
 \end{equation}

The Lagrangian dual of problem (\ref{RSCM_Lagrangian_11}) is the problem
\begin{equation}\label{RSCM_Lagrangian_22}
	\inf _{\lambda \geq 0, \mu\in\RR} \sup _{L\geq 0} \mathcal{L}(L, \lambda, \mu)
	\end{equation}

Since Slater condition holds for problem (\ref{RSCM_Lagrangian_11})\footnote{For instance, we can take $L(\varepsilon)= 1$ for all $\varepsilon\in \RR^{J+1}$. }, there is no duality gap between (\ref{RSCM_Lagrangian_11}) and its dual problem (\ref{RSCM_Lagrangian_22}). Moreover, the dual problem has a nonempty and bounded set of optimal solutions.
% In this case the $W^{DRO}(u)\longrightarrow W^{SCM}(u)$. In the DRO-RUM, the goal is to set $\rho$ to reflect the perceived uncertainty that the  DM (or the researcher) experiences in the distributional assumption for $\varepsilon$. %\textcolor{brown}{Thus, smaller values of of $\rho$ are in general favorable.}
\smallskip

By the interchangeability principle (\cite[Thm. 3A]{Rockafellar_1976}), the maximum in (\ref{RSCM_Lagrangian_22}) can be taken inside the integral, that is
$$
\begin{gathered}
	\sup _{L \geq 0} \int_{\RR^{J+1}}[L(\varepsilon)H(u,\varepsilon) -\mu L(\varepsilon)-\lambda \phi(L(\varepsilon))] d F(\varepsilon) \\
	=\int_{\RR^{J+1}}{\sup _{t\geq 0}\{t(H(u,\varepsilon)-\mu)-\lambda \phi(t)\}} d F(\varepsilon),
\end{gathered}
$$
Noting that 
$(\lambda\phi)^*(H(u,\varepsilon)-\mu)=\sup _{t\geq 0}\{t(H(u,\varepsilon)-\mu)-\lambda \phi(t)\}$, then it follows that 
\begin{equation}\label{RSCM_Lagrangian_3}
W^{DRO}(u)=\inf _{\lambda \geq 0, \mu\in \RR}\left\{\lambda \rho+\mu+\mathbb{E}_{F}\left[(\lambda \phi)^{*}(H(u,\varepsilon)-\mu)\right]\right\}.
\end{equation}

To show the convexity with respect to $\lambda$ and $\mu$ we note that it suffices in  (\ref{RSCM_Lagrangian_22}) and (\ref{RSCM_Lagrangian_3}) to take the  $\inf$ with respect to $\lambda>0$ rather than $\lambda \geq 0$, and that $(\lambda \phi)^*(y)=\lambda \phi^*(y / \lambda)$ for $\lambda>0$. Therefore $W^{DRO}(u)$ is given by the optimal value of the following problem:
\begin{equation}\label{RSCM_Lagrangian_4}
\inf_{\lambda>0, \mu\in \RR}\left\{\lambda \rho+\mu+\lambda \mathbb{E}_F\left[\phi^*((H(u,\varepsilon)-\mu) / \lambda)\right]\right\}
\end{equation}
Note that $\phi^*(\cdot)$ is convex. Hence, $\lambda \phi^*(y / \lambda)$ is jointly  convex  in $y$ and $\lambda>0$. It follows that the objective function of problem (\ref{RSCM_Lagrangian_4}) is a convex function of $\lambda>0$ and $\mu \in \mathbb{R}$ with $y=H(u,\varepsilon)-\mu$. Hence (\ref{RSCM_Lagrangian_4}) is a convex problem.
\eproof

\smallskip

 An important implication of Proposition \ref{RSCM_Characterization}  is the fact that we can characterize the function $W^{DRO}(u)$  as the solution of a \emph{finite-dimensional convex} optimization problem. The efficiency in solving program (\ref{RSCM_Duality}) strongly depends on expectation w.r.t. the nominal distribution $F$ and the properties of the convex conjugate $\phi^*$. 
 \smallskip
 
  The next corollary  formalizes  the connection between $W(u)$ and $W^{DRO}(u)$ when $\rho=0$.

%It seems that $\rho=0$ implies that $\lambda\longrightarrow\infty$. Then we should have that in Proposition  \ref{RSCM_Characterization}  $W^{DRO}(u)\longrightarrow W(u)$. Can we formalize  the following proposition? It is easy to see from the definition of the DRO-RUM but maybe we can use our characterization.
	\begin{corollary} Let Assumption \ref{RUM_conditions}  hold. Then for $\rho=0$  we get $W^{DRO}(u)= W(u)$.
\end{corollary}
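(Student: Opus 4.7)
The plan is to show that setting $\rho=0$ collapses the uncertainty set $\mathcal{M}_\phi(F)$ to the singleton $\{F\}$, at which point the supremum in \eqref{Robust_SCM} reduces to the expectation in \eqref{SSF}, and the conclusion is immediate.

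First, I would recall the definition
\[
\mathcal{M}_\phi(F) = \{G \ll F : D_\phi(G\|F) \leq \rho\},
\]
and specialize it to $\rho = 0$, so that $\mathcal{M}_\phi(F) = \{G \ll F : D_\phi(G\|F) \leq 0\}$. Since $\phi \in \Phi$ is nonnegative with $\phi(1)=0$, Proposition \ref{phi_properties} yields $D_\phi(G\|F) \geq 0$ for every $G \ll F$, with equality if and only if the densities agree $F$-almost everywhere. Combining these facts, the inequality $D_\phi(G\|F) \leq 0$ forces $D_\phi(G\|F) = 0$, and hence $G = F$ (up to a null set, which is the relevant notion of equality of probability measures here).

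With $\mathcal{M}_\phi(F) = \{F\}$, the supremum in \eqref{Robust_SCM} is taken over a single distribution, so
\[
W^{DRO}(u) \;=\; \sup_{G \in \{F\}} \mathbb{E}_G\!\left[\max_{j\in\mathcal{J}}\{u_j + \varepsilon_j\}\right] \;=\; \mathbb{E}_F\!\left[\max_{j\in\mathcal{J}}\{u_j + \varepsilon_j\}\right] \;=\; W(u),
\]
where the last equality is the definition of the surplus function in \eqref{SSF}. Assumption \ref{RUM_conditions} guarantees that this expectation is finite, so no pathological behavior arises.

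There is no real obstacle here beyond invoking the characterization of equality in Jensen's inequality for $\phi$ that is built into Proposition \ref{phi_properties}. The only subtlety worth flagging is the identification of $G$ with $F$ on the basis of almost-everywhere agreement of densities, which is standard and does not affect the value of the expectation. The corollary therefore follows directly from Proposition \ref{phi_properties} and the definition of $W^{DRO}$.
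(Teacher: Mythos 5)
Your proof is correct, and it reaches the conclusion by a slightly more direct route than the paper. The shared core is identical: at $\rho=0$ the divergence constraint, together with nonnegativity of $D_\phi$ and the equality case ($D_\phi(G\Vert F)=0$ iff the densities agree a.e., Proposition \ref{phi_properties}), forces the feasible set to collapse to $F$. Where you differ is in what happens next: you stay entirely in the primal formulation \eqref{Robust_SCM} and simply observe that a supremum over the singleton $\{F\}$ is $\mathbb{E}_F[H(u,\varepsilon)]=W(u)$, whereas the paper phrases the collapse in terms of the likelihood ratio ($\mathbb{E}_F[\phi(L(\varepsilon))]\le 0$ forces $L\equiv 1$) and then substitutes $L=1$ back into the Lagrangian \eqref{RSCM_Lagrangian_11} to watch the multiplier terms cancel. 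The paper's detour through the Lagrangian is natural if one wants the corollary to read as a sanity check on the dual representation of Proposition \ref{RSCM_Characterization}, but it is not needed for the statement itself; your argument is self-contained, uses only the definition of $W^{DRO}$ and Proposition \ref{phi_properties}, and correctly flags the only subtlety (identification of $G$ with $F$ up to a null set, which does not affect the expectation).
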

\proof Let us look at problem \eqref{RSCM_Program_11}. If $\rho=0$ we get from one constraint that 
	\[
	\mathbb{E}_{F}[\phi(L(\varepsilon))] \leq 0.
	\]
	Due to the definition of $\phi$, this implies that $L(\varepsilon) =1$. Hence, the Lagrangian simplifies since the supremum over the densities becomes trivial. Let us plug  $L(\varepsilon) =1$ into Equation \eqref{RSCM_Lagrangian_11}:
	\[
	\mathcal{L}(L, \lambda, \mu)=\int_{\RR^{J+1}}H(u,\varepsilon) -\lambda \cdot \underbrace{\phi(1)}_{=0}-\mu \cdot 1] d F(\varepsilon)+\lambda \cdot 0+\mu.
	\]
	The latter is  equivalent to 
	\[
	\mathbb{E}_{F}\left[H(u,\varepsilon)  -\mu \right] + \mu = 	\mathbb{E}_{F}\left[H(u,\varepsilon) \right],
	\]
	where the last equality holds due to the linearity of expectation. We indeed recover  $W(u)$ for any distribution satisfying Assumption \ref{RUM_conditions}. \eproof
\smallskip

\subsection{A robust WDZ theorem}A fundamental aspect of RUMs is the possibility of characterizing choice probabilities under specific distributional assumptions on $\varepsilon$. Formally, and as a consequence of Assumption \ref{RUM_conditions},  the WDZ theorem establishes that the gradient of $W(u)$  yields the choice probability vector $p(u)$. In this section, we show that in the DRO-RUM, a similar result holds. In particular,  we show that $\nabla W^{DRO}(u)=p^\star(u)$ where $p^\star(u)$ corresponds to the choice probability vector generated by the optimal solution to \eqref{RSCM_Duality} approach. To establish this result, we need the following assumption.

%Therefore, we want to derive robust choice probabilities based on the robust surplus function \eqref{Robust_SCM}. In what follows we show that this robust choice probabilities can be characterized by the gradient of the robust surplus function associated with certain $\phi$-divergences. The latter are identified by an additional  assumption.

\begin{assumption}\label{phi_diffferentiable_Assumption}$\phi^*(s)$ is strictly convex and differentiable with $\phi^{*\prime}(s)\geq0$ for all $s$. 
\end{assumption}
We point out that many  $\phi$-divergence functions satisfy Assumption \ref{phi_diffferentiable_Assumption}. Table \ref{tb:div} overviews three popular $\phi$-divergences satisfying this assumption.

	\begin{table}[h]
		\begin{center}
			\begin{tabular}{ |c | c | c |c | c | c |}
				\hline
				Divergence & $\phi(t)$  &  $\phi^*(s)$ & Domain & $\phi^{*'}$  &$\phi^{*''}$ \\ \hline
				{\it Kullback-Leibler} & $t \log t$ & $e^{s-1}$ & $\RR$ & $e^{s-1}$ & $e^{s-1}$ \\ \hline
				{\it  Reverse Kullback-Leibler} & $- \log(t)$ & $ -1-\log(-s)$ & $\RR_{--}$ & $-\frac{1}{s}$  & $\frac{1}{s^2}$ \\ \hline
			{\it Hellinger Distance} & $(\sqrt{t}-1)^2$ & $\frac{s}{1-s}$ & $s < 1$ & $ \frac{1}{\left(1-s\right)^2}$ &  $-\frac{2}{\left(s-1\right)^3}$ \\ \hline
				
			\end{tabular}
			\caption{$\phi$-divergences with their convex conjugates and first and second derivatives.}\label{tb:div}
		\end{center}
	\end{table} 
\begin{comment}
	
We point out that many $\phi$-divergence functions satisfy Assumption \ref{phi_diffferentiable_Assumption}. \\  For instance, selecting $\phi(t)=t \log t$ recovers the popular  {\it Kullback-Leibler } (KL) divergence with  $\phi^{*}(s)=e^{s-1}$, which is obviously differentiable and strictly convex. \\ Similarly,  choosing $\phi(t)=- \log(t)$ yields the  {\it  Reverse Kullback-Leibler} (RKL) divergence with convex conjugate $\phi^*(s) = -1-\log(-s)$ for $s < 0$. The derivative $\phi^{*'}(s) = -\frac{1}{s}$  is clearly positive on the domain $\mathbb{R}_-$. Inspecting the second derivative, yields the strict convexity: $\phi^{*''}(s) = \frac{1}{s^2} >  0$. \\ Finally, for $\phi(t)=(\sqrt{t}-1)^2$ one obtains the  {\it Hellinger Distance} with conjugate $\phi^*(s) = \frac{s}{1-s}, \; s < 1$. Hence, $\phi^*(s)$ is differentiable with $\phi^{*'}(s) = \frac{1}{\left(1-s\right)^2} > 0\;, s<1$. Moreover, the second derivative is positive on the domain, i.\,e. $ \phi^{*''}(s) = -\frac{2}{\left(s-1\right)^3} >0$ for $s < 1$ and thus, the function is strictly convex.
\end{comment}
	
As a  direct implication of the Assumption  \ref{phi_diffferentiable_Assumption} we can establish the strict convexity and uniqueness of an optimal solution to  \eqref{RSCM_Duality}. 
\begin{lemma}\label{uniqueness_convexity_RSCM} Let  Assumptions \ref{RUM_conditions} and \ref{phi_diffferentiable_Assumption} hold.  Then program  \eqref{RSCM_Duality} is strictly convex  and has a unique optimal solution $\lambda^\star$ and $\mu^\star$.
\end{lemma}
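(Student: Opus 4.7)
The plan is to combine the joint convexity already secured by Proposition \ref{RSCM_Characterization} with the extra regularity from Assumption \ref{phi_diffferentiable_Assumption} to upgrade the objective from convex to \emph{strictly} convex in $(\lambda,\mu)$; together with the nonempty, bounded optimal set that the Slater/Shapiro argument inside Proposition \ref{RSCM_Characterization} already furnishes, strict convexity will immediately pin down a unique minimizer. Concretely, I would work on the reformulation (\ref{RSCM_Lagrangian_4}) restricted to $\lambda>0$, so the objective reads
\[
\Psi(\lambda,\mu)\;=\;\lambda\rho+\mu+\lambda\,\EE_F\!\left[\phi^{*}\!\left(\frac{H(u,\varepsilon)-\mu}{\lambda}\right)\right].
\]

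For the strictness step, I would fix $\varepsilon$ and examine the integrand $\tilde f_\varepsilon(\lambda,\mu)\triangleq\lambda\,\phi^{*}((H(u,\varepsilon)-\mu)/\lambda)$, which is the perspective transform of $\phi^{*}$ in the pair $(\lambda,\mu)$. A standard rewriting of the perspective identity shows that for any $\alpha\in(0,1)$ and any distinct $(\lambda_1,\mu_1),(\lambda_2,\mu_2)\in(0,\infty)\times\RR$, the strict convexity of $\phi^{*}$ delivers strict convexity of $\tilde f_\varepsilon$ along the chord exactly when $(H(u,\varepsilon)-\mu_1)/\lambda_1\neq(H(u,\varepsilon)-\mu_2)/\lambda_2$. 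Rearranging, equality can fail only on the event $\{(\lambda_2-\lambda_1)H(u,\varepsilon)=\lambda_2\mu_1-\lambda_1\mu_2\}$: when $\lambda_1=\lambda_2$ distinctness forces $\mu_1\neq\mu_2$ and the event is empty, while for $\lambda_1\neq\lambda_2$ it reduces to $\{H(u,\varepsilon)=c\}$ for a single constant $c$. Next, I would invoke Assumption \ref{RUM_conditions}: because $\varepsilon$ is absolutely continuous with full support on $\RR^{J+1}$, the marginal laws of the coordinates $u_j+\varepsilon_j$ are absolutely continuous with respect to Lebesgue measure, hence $H(u,\varepsilon)=\max_{j\in\mathcal{J}}\{u_j+\varepsilon_j\}$ has $\PP_F(H(u,\varepsilon)=c)=0$ for every constant $c$ (by a union bound over $j$). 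Therefore strict convexity of $\tilde f_\varepsilon$ holds on a set of full $F$-measure, integrating preserves the strict inequality, and the affine term $\lambda\rho+\mu$ leaves it intact, so $\Psi$ is strictly convex on $(0,\infty)\times\RR$.

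The step I expect to be the main obstacle is precisely this strictness argument: the perspective of a strictly convex function is positively homogeneous of degree one and therefore \emph{not} strictly convex in general, since it is affine along every ray through the origin. The escape is the randomization in $\varepsilon$: the offending rays correspond to events of the form $\{H(u,\varepsilon)=c\}$, which are $F$-null by absolute continuity. With strict convexity of $\Psi$ on the feasible domain in hand, the nonempty bounded optimal set already produced in the proof of Proposition \ref{RSCM_Characterization} yields existence, and strict convexity upgrades this to uniqueness of the pair $(\lambda^\star,\mu^\star)$.
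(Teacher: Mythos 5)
Your proof is correct, and it takes a genuinely more careful route than the paper's. The paper's own argument is a two-line appeal to the literature: it asserts that, since $\phi^*$ is strictly convex, the perspective $\lambda\,\phi^*(s/\lambda)$, $\lambda>0$, is strictly convex (``following similar steps as'' a cited reference), and then concludes uniqueness from the fact that adding a convex function to a strictly convex one preserves strict convexity. As you correctly observe, that first assertion is false as stated: the perspective is positively homogeneous of degree one and hence affine along every ray through the origin, so it cannot be jointly strictly convex in $(\lambda,s)$; the paper's proof as written has a gap at exactly this point. Your fix is the right one and is the substantive content missing from the paper: for a fixed chord in $(\lambda,\mu)$-space, the midpoint inequality for the integrand $\lambda\,\phi^*\bigl((H(u,\varepsilon)-\mu)/\lambda\bigr)$ can fail to be strict only on the event $\{H(u,\varepsilon)=c\}$ for a single constant $c$ (or on the empty set when $\lambda_1=\lambda_2$), and under Assumption \ref{RUM_conditions} this event is $F$-null because $\{H(u,\varepsilon)=c\}\subseteq\bigcup_{j}\{\varepsilon_j=c-u_j\}$ is a finite union of hyperplanes. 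Integrating then yields strict convexity of $\Psi$ on $(0,\infty)\times\RR$, and combined with the nonempty bounded optimal set from the duality argument in Proposition \ref{RSCM_Characterization} this gives existence and uniqueness. In short: the paper buys brevity at the cost of a claim that is not literally true, while your randomization-over-$\varepsilon$ argument supplies the missing justification and makes transparent exactly where Assumption \ref{RUM_conditions} (absolute continuity of $F$) is needed for the conclusion.
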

\proof  Due to Assumption \ref{phi_diffferentiable_Assumption}, the function $\phi^*$ is strictly convex. Following  similar steps as \citet{dacorogna2008role}, it follows that  $\lambda\cdot \phi^*(\frac{s}{\lambda})$, $\lambda > 0$, is strictly convex. Further, the sum of a convex and strictly convex is strictly convex. This latter fact immediately implies strict convexity of the objective function in $\lambda$ and $\mu$. Given the strict convexity in $\lambda$ and $\mu$, it follows that program \eqref{RSCM_Duality} has a unique solution.\eproof
 \smallskip
 
 A second important implication of Assumption \ref{phi_diffferentiable_Assumption} is the possibility of characterizing the robust density associated to the optimal solution of the program \eqref{RSCM_Duality}.
 \begin{lemma}\label{optimal_mu}Let  Assumptions \ref{RUM_conditions} and \ref{phi_diffferentiable_Assumption} hold.   For a fixed $u\in \mathcal{U}$,  let $\lambda^\star>0$ and $\mu^\star\in \RR$ be the unique optimal solution to problem  (\ref{RSCM_Duality}). Then the unique robust density  $g^\star(\varepsilon)$ corresponds to:
 	\begin{equation}\label{Optimal_Density}
 		g^\star(\varepsilon)=\phi^{\ast\prime}\left({H(u,\varepsilon)-\mu^\star\over \lambda^\star}\right)f(\varepsilon)\quad\forall \varepsilon\in \RR^{J+1}.
 	\end{equation}
 \end{lemma}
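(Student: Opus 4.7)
The plan is to read off $g^\star$ from the KKT conditions of the primal-dual pair derived inside the proof of Proposition \ref{RSCM_Characterization}. Recall that the primal problem in the likelihood-ratio variable $L$ is
\[
\sup_{L\ge 0}\left\{\mathbb{E}_{F}[L(\varepsilon)H(u,\varepsilon)] : \mathbb{E}_{F}[\phi(L(\varepsilon))]\le\rho,\ \mathbb{E}_{F}[L(\varepsilon)]=1\right\},
\]
Slater's condition holds (take $L\equiv 1$), so strong duality applies and the dual optimum $(\lambda^\star,\mu^\star)$ is attained; by Lemma \ref{uniqueness_convexity_RSCM} it is unique, with $\lambda^\star>0$. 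The existence of a primal optimal density $g^\star$ (or a limiting sequence) follows because the dual value equals the primal value and is finite.

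Next, I would use the interchangeability principle of \cite{Rockafellar_1976} that was already invoked in the proof of Proposition \ref{RSCM_Characterization}: at the dual optimum, $L^\star(\varepsilon)$ must solve, for $F$-a.e.\ $\varepsilon$, the pointwise concave problem
\[
\max_{t\ge 0}\Bigl\{ t\bigl(H(u,\varepsilon)-\mu^\star\bigr)-\lambda^\star\,\phi(t)\Bigr\}.
\]
By Fenchel--Young duality, since $\phi$ is proper closed convex and $\phi^{*}$ is differentiable (Assumption \ref{phi_diffferentiable_Assumption}), the inner maximizer is the unique point $t^\star(\varepsilon)$ satisfying $\lambda^\star\phi'(t^\star(\varepsilon))=H(u,\varepsilon)-\mu^\star$, equivalently
\[
t^\star(\varepsilon)=\phi^{*\prime}\!\left(\frac{H(u,\varepsilon)-\mu^\star}{\lambda^\star}\right).
\]
Because $\phi^{*\prime}\ge 0$, this is automatically nonnegative, so the constraint $t\ge 0$ is not binding and no additional multiplier enters. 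Uniqueness of the pointwise maximizer is guaranteed by the strict convexity of $\phi^{*}$, which makes $\phi^{*\prime}$ strictly increasing (hence injective) and rules out alternative maximizers.

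Setting $L^\star(\varepsilon)=t^\star(\varepsilon)$ and recalling $g^\star(\varepsilon)=L^\star(\varepsilon)\,f(\varepsilon)$ yields the claimed formula
\[
g^\star(\varepsilon)=\phi^{*\prime}\!\left(\frac{H(u,\varepsilon)-\mu^\star}{\lambda^\star}\right)f(\varepsilon),\qquad \varepsilon\in\mathbb{R}^{J+1}.
\]
Uniqueness of $g^\star$ then follows from uniqueness of $(\lambda^\star,\mu^\star)$ (Lemma \ref{uniqueness_convexity_RSCM}) combined with the pointwise uniqueness of $t^\star(\varepsilon)$.

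The main technical obstacle is the appeal to the interchangeability of maximization and integration and the verification that the pointwise optimizer $t^\star(\varepsilon)$ defines an admissible density, i.e.\ that $\mathbb{E}_{F}[L^\star]=1$ and $\mathbb{E}_{F}[\phi(L^\star)]\le\rho$. These identities, however, are exactly the primal feasibility conditions delivered by the KKT stationarity equations for the dual variables $\mu$ and $\lambda$ applied to the dual problem \eqref{RSCM_Duality}; specifically, differentiating the dual objective with respect to $\mu$ and $\lambda$ at $(\mu^\star,\lambda^\star)$ yields $\mathbb{E}_{F}[\phi^{*\prime}((H-\mu^\star)/\lambda^\star)]=1$ and the appropriate complementary-slackness relation on the divergence constraint. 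Once this check is in place, the conclusion is immediate.
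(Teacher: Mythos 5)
Your proof is correct, and it takes a more complete route than the paper's. The paper's own argument is essentially a verification: it writes the first-order conditions of $\Psi(\lambda,\mu)=\lambda\rho+\mu+\lambda\EE_F\bigl[\phi^*\bigl((H(u,\varepsilon)-\mu)/\lambda\bigr)\bigr]$, observes that the $\mu$-stationarity condition reads $\EE_F\bigl[\phi^{*\prime}\bigl((H(u,\varepsilon)-\mu^\star)/\lambda^\star\bigr)\bigr]=1$, then simply \emph{defines} $g^\star$ by the claimed formula and checks that it is nonnegative (by $\phi^{*\prime}\ge 0$) and integrates to one; it never derives the formula as the argmax of the primal problem. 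You instead go back to the Lagrangian of the likelihood-ratio problem, apply the interchangeability principle already used in Proposition \ref{RSCM_Characterization}, and identify the pointwise maximizer via Fenchel--Young as $t^\star(\varepsilon)=\phi^{*\prime}\bigl((H(u,\varepsilon)-\mu^\star)/\lambda^\star\bigr)$, with primal feasibility recovered from the dual stationarity conditions. This buys an actual justification that $g^\star$ attains the supremum in \eqref{Robust_SCM}, which the paper only establishes implicitly later in Theorem \ref{Robust_WDZ}(i). Two small points to tidy up: uniqueness of the pointwise maximizer of $t\mapsto ts-\phi(t)$ is equivalent to differentiability of $\phi^*$ at $s$ (a singleton subdifferential $\partial\phi^*(s)$), not to strict convexity of $\phi^*$ --- strict convexity of $\phi^*$ is dual to differentiability of $\phi$, so your appeal to it is a misattribution, though harmless since differentiability is also assumed; and the deferred check $\EE_F[\phi(L^\star)]\le\rho$ deserves one explicit line via the $\lambda$-stationarity/complementary-slackness condition rather than a gesture toward ``the appropriate relation.''
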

 
 \proof Define $\Psi(\lambda,\mu) := \lambda\rho+\mu+\lambda\EE_{F}\left( \phi^*\left({H(u,\varepsilon)-\mu\over\lambda}\right)  \right)$. Optimizing $\Psi(\lambda,\mu)$ w.r.t $\lambda$ and $\mu$,  the first order conditions   combined with Assumption \ref{phi_diffferentiable_Assumption} yield that the optimal solution $\lambda^\star$ and $\mu^\star$  must satisfy
 \begin{eqnarray}
 	\EE_{F}\left(\phi^{\ast\prime}\left({H(u,\varepsilon)-\mu^\star\over \lambda^\star}\right)\right)&=&1\nonumber\\
 	\int_{\RR^{J+1}}\phi^{\ast\prime}\left({H(u,\varepsilon)-\mu^\star\over \lambda^\star}\right)f(\varepsilon)d\varepsilon&=1&\nonumber
 \end{eqnarray}
 Define $g^\star(\varepsilon)\triangleq\phi^{\ast\prime}\left({H(u,\varepsilon)-\mu^\star\over \lambda^\star}\right)f(\varepsilon)$. It follows  that $\int_{\RR^{J+1}}g^\star(\varepsilon)d\varepsilon=1$. Furthermore, by Assumption \ref{phi_diffferentiable_Assumption}, it follows that $g^\star(\varepsilon)\geq0$ for all $\varepsilon\in \RR^{J+1}$. Hence,  we conclude that $g^\star(\varepsilon)$ is indeed a probability density, and we call it the robust density associated with the problem (\ref{RSCM_Duality}). \eproof
 
 Some remarks are in order. First, the robust density $g^\star$ depends on the choice of the $\phi$-divergence through its conjugate $\phi^\ast$. Moreover, the robust density depends on the deterministic utility vector via $H(u,\varepsilon)$, even though the nominal distribution $F$ does not depend on $u$ due to Assumption \ref{RUM_conditions}. In addition, $g^\star$ allows us to define the robust distribution function $G^\star$, which, as we shall see, plays a key role in providing an explicit form for $W^{DRO}(u)$.   Second,  Lemma \ref{optimal_mu} establishes that   the robust density $g^{\star}(\varepsilon)$ incorporates correlation in the elements of the random vector $\varepsilon$ through the factor $\phi^{\ast\prime}((H(u,\varepsilon)-\mu^\star)/\lambda^\star)$. Thus, even though the nominal distribution $F$ may assume that $\varepsilon_0,\varepsilon_1,\ldots,\varepsilon_J$ are independent, the DRO-RUM approach introduces correlation of these terms.

\begin{example}\label{example:kld}[KL-Divergence] We now consider the case of the Kullback-Leibler divergence. In doing so, we define  $\phi$ as follows:
	\begin{equation}\label{KL_Divergence}
		\phi(t)\triangleq t\log t, \; t \geq 0
	\end{equation}
	We note that in the previous expression, $0\log 0=0$. Here 
	\begin{equation}
		\int_{\RR^J} \phi(L(\varepsilon)) dF(\varepsilon)
	\end{equation}
	defines the Kullback-Leibler divergence, denoted $D_{KL}(G \| F)$. For $\lambda>0$ the conjugate of $\lambda \phi$ is $(\lambda \phi)^*(y)=\lambda\left(e^{y / \lambda}-1\right)$. 
	From Proposition \ref{RSCM_Characterization} we know that
	\begin{equation}\label{KL_RSCM}
		W^{DRO}(u)=\inf _{\lambda \geq 0, \mu}\left\{\lambda \rho+\mu+\lambda e^{-\mu / \lambda} \mathbb{E}_F\left[e^{H(u,\varepsilon) / \lambda}\right]-\lambda\right\}
	\end{equation}
	In (\ref{KL_RSCM}) minimizing with respect to $\mu$  yields $\mu^\star=\lambda \ln \mathbb{E}_F\left[e^{H(u,\varepsilon) / \lambda}\right]$. Plugging $\mu^\star$  in (\ref{KL_RSCM}) we obtain $\lambda^\star$ as the solution to
	\begin{equation}\label{KL_RSCM_1}
		W^{DRO}(u)=\inf _{\lambda>0}\left\{\lambda \rho+\lambda \ln \mathbb{E}_F\left[e^{H(u,\varepsilon) / \lambda}\right]\right\}.
	\end{equation}
	
	It is well-known that in the case of the KL divergence (e.g., \cite{Hu_Hong_2012} and \cite{Hansen_Sargent_2001}), the ``robust'' density is given by:
	\begin{equation}\label{eq:kld.dens}
		f^{DRO}(\varepsilon)=f(\varepsilon){e^{H(u,\varepsilon)/\lambda^\star}\over \EE_{F}(e^{H(u,\varepsilon)/\lambda^\star})}
	\end{equation}
	where $f(\varepsilon)$ is the density associated to a nominal distribution, $H(u,\varepsilon)=\max_{j\in \mathcal{J}}\{ u_j+\varepsilon_j\}$ and $\lambda^\star$ is the unique optimal solution to (\ref{KL_RSCM_1}). To see how the optimal density  \eqref{eq:kld.dens} compares to the case where the nominal is a Gumbel distribution, in Figure \ref{fig:three graphs}. 
\end{example}

\begin{figure}
	\centering
	\begin{subfigure}[b]{0.3\textwidth}
		\includegraphics[width=\textwidth]{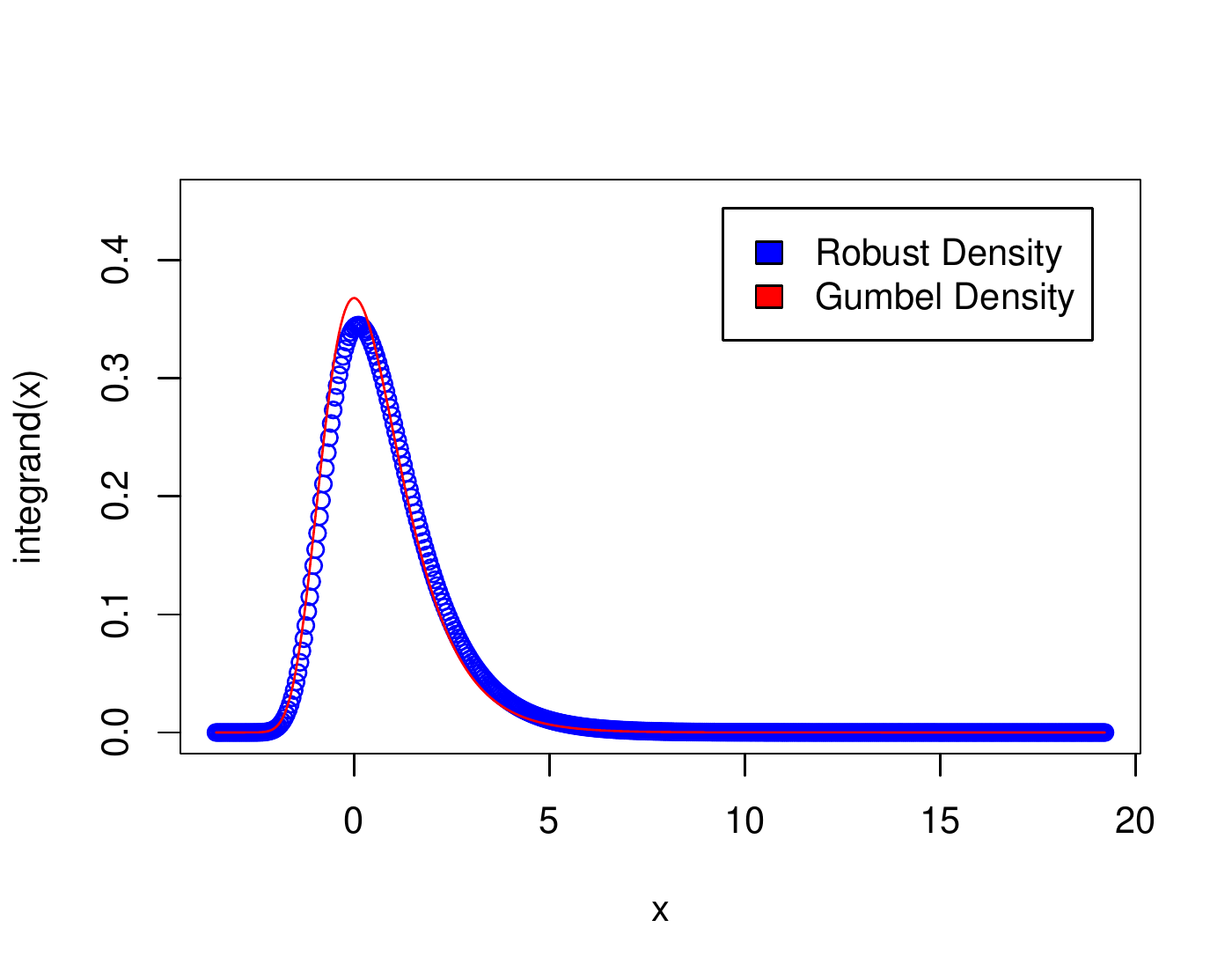}
		\caption{$\rho=0.01$}
		\label{fig:y equals x}
	\end{subfigure}
	\hfill
	\begin{subfigure}[b]{0.3\textwidth}
		\includegraphics[width=\textwidth]{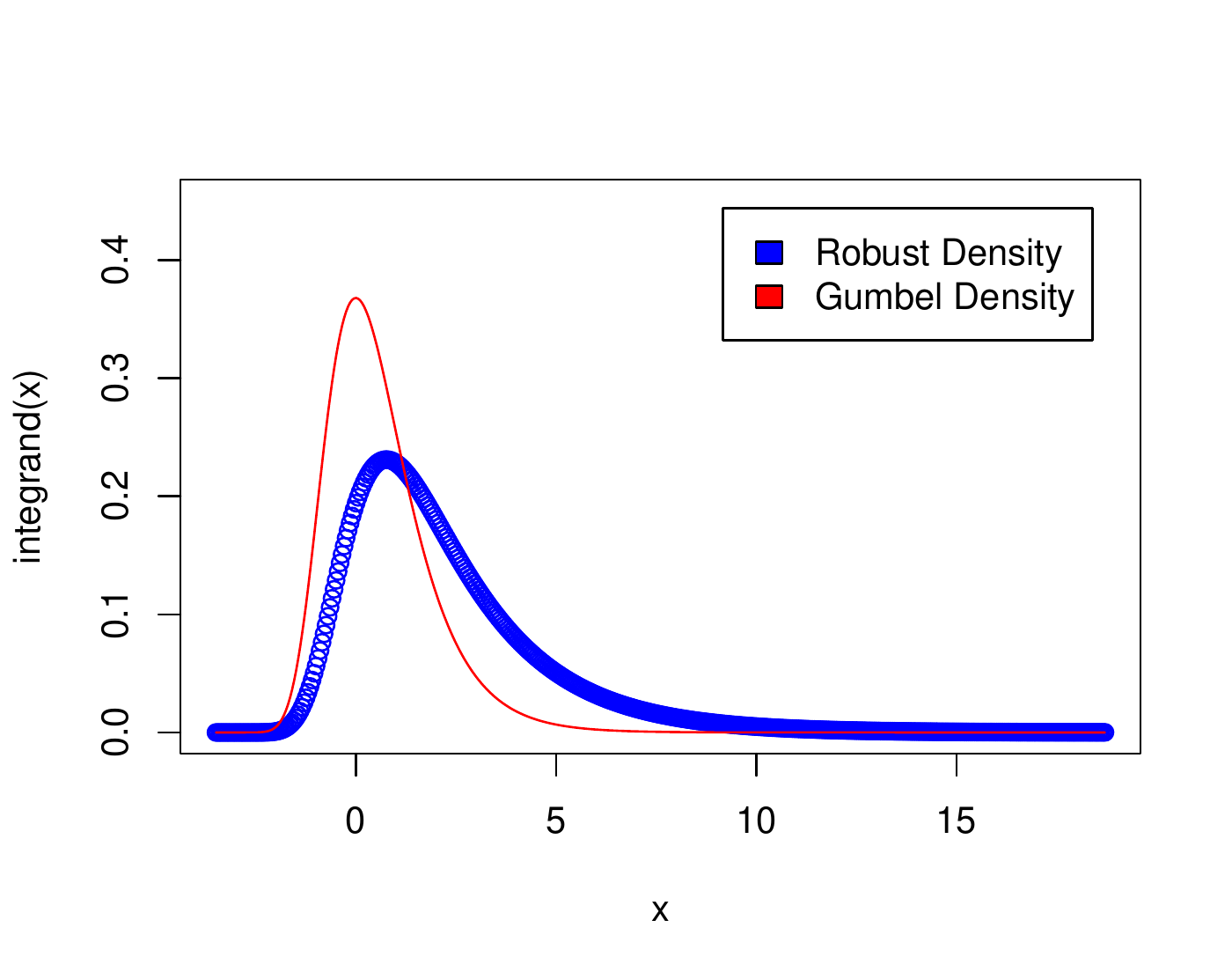}
		\caption{$\rho=0.5$}
		\label{fig:three sin x}
	\end{subfigure}
	\hfill
	\begin{subfigure}[b]{0.3\textwidth}
		\includegraphics[width=\textwidth]{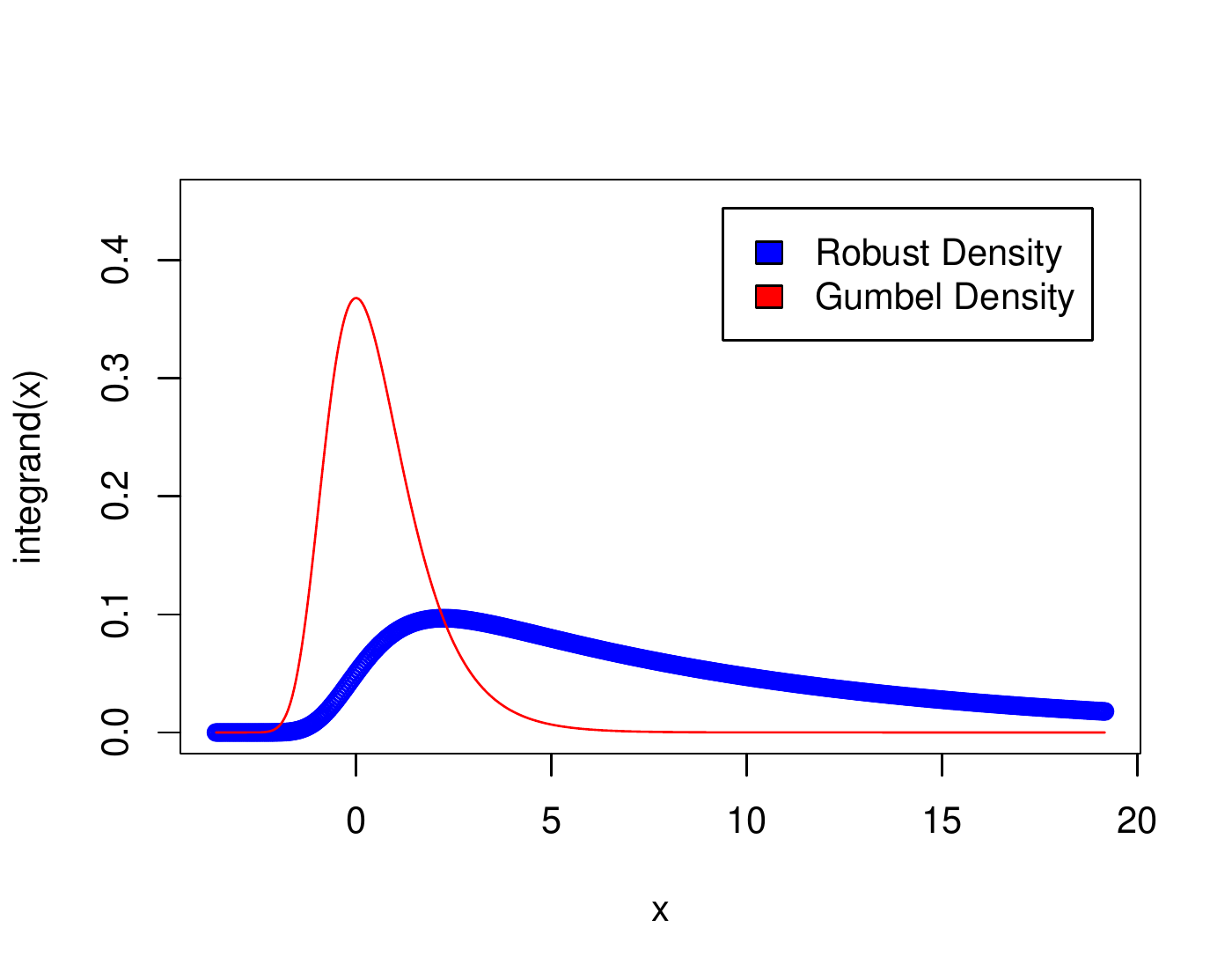}
		\caption{$\rho=3.5$}
		\label{fig:five over x}
	\end{subfigure}
	\caption{Three instances of $\rho$}
	\label{fig:three graphs}
\end{figure}
%\textcolor{red}{Include graphic in 1-dim case. Note that there, it is independent of $u$.}
%\textcolor{red}{We can plot the Gumbel distribution (density) and the adjusted density $f^{DRO}(\varepsilon)$. We can also compare this adjusted density to the mixed MNL. As the robust density depends on the level of $u$ it moght be interesting to plot the robust density for different values of $u$.}

The result in Lemma \ref{optimal_mu} enables us to characterize the choice probability vector $p^\star(u)$ similarly to the celebrated WDZ theorem.
 \begin{theorem}\label{Robust_WDZ} Let Assumptions \ref{RUM_conditions} and \ref{phi_diffferentiable_Assumption} hold. Let $\lambda^\star$ and $\mu^\star$ be the unique optimal solution to program (\ref{RSCM_Duality}), which induces  $g^\star$ and $G^\star$ as the optimal density and distribution function, respectively. Then  the following statements hold:
 	\begin{itemize}
 		\item[(i)] 	The robust social surplus corresponds to the following:
 		\begin{equation*}\label{Optimized_RSCM}
 			W^{DRO}(u)=\EE_{G^\star}\left(\max_{j\in \mathcal{J}}\{u_j+\varepsilon_j\}\right)
 		\end{equation*}
 		\item[(ii)] The choice probability vector $p^\star(u)$ is
 		$$\nabla W^{DRO}(u)=p^\star(u).$$
 	%	\begin{equation}
 		%	{\partial W^{DRO}(u)\over \partial u_j}={\partial\EE_{G^\star}\left(\max_{j^\prime\in \mathcal{J}}\{u_j+\varepsilon_j\}\right)\over \partial u_j}=	p^\star_j(u)\label{Probabilites_RSCM}
 	%	\end{equation}
 		
 	\end{itemize}
 	\end{theorem}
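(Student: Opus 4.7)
My plan is to deduce part (i) from strong duality, which has already been established in Proposition \ref{RSCM_Characterization}, and to obtain part (ii) by a Danskin/envelope argument applied to the dual representation \eqref{RSCM_Duality}.

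For part (i) I would start from the primal form \eqref{RSCM_Program_11} and the dual \eqref{RSCM_Duality}. Proposition \ref{RSCM_Characterization} guarantees the absence of a duality gap and a bounded nonempty dual optimal set; under Assumption \ref{phi_diffferentiable_Assumption} Lemma \ref{uniqueness_convexity_RSCM} makes the dual minimizer unique. Substituting the optimal $(\lambda^\star,\mu^\star)$ into the stationarity conditions (treated in the proof of Lemma \ref{optimal_mu}) gives the likelihood ratio $L^\star(\varepsilon)=\phi^{\ast\prime}((H(u,\varepsilon)-\mu^\star)/\lambda^\star)$, hence the density $g^\star$. Plugging $L^\star$ into the primal objective $\mathbb{E}_F[L(\varepsilon) H(u,\varepsilon)]$, I would verify that it equals the dual optimal value $W^{DRO}(u)$, which by definition is $\mathbb{E}_{G^\star}[H(u,\varepsilon)]$. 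Since $L^\star$ is primal feasible (nonnegativity by Assumption \ref{phi_diffferentiable_Assumption}, unit mean and divergence bound by the KKT conditions) and attains the dual value, it is the (unique) primal optimizer, proving (i).

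For part (ii) I would invoke the envelope theorem. Define
\[
\Psi(u,\lambda,\mu)=\lambda\rho+\mu+\lambda\,\mathbb{E}_F\!\left[\phi^{*}\!\left(\tfrac{H(u,\varepsilon)-\mu}{\lambda}\right)\right],
\]
so $W^{DRO}(u)=\inf_{\lambda>0,\mu\in\mathbb{R}}\Psi(u,\lambda,\mu)$ with unique minimizer $(\lambda^\star(u),\mu^\star(u))$. Convexity (Lemma \ref{lem:surplus}(iv)) together with uniqueness of the minimizer and joint convexity of $\Psi$ places us in the classical Danskin setting, yielding $\nabla_u W^{DRO}(u)=\nabla_u \Psi(u,\lambda^\star,\mu^\star)$. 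Differentiating under the expectation (justified below) and using $\phi^{\ast\prime}((H(u,\varepsilon)-\mu^\star)/\lambda^\star)=g^\star(\varepsilon)/f(\varepsilon)$ from Lemma \ref{optimal_mu}, I obtain
\[
\nabla_u W^{DRO}(u)=\mathbb{E}_F\!\left[\phi^{\ast\prime}\!\left(\tfrac{H(u,\varepsilon)-\mu^\star}{\lambda^\star}\right)\nabla_u H(u,\varepsilon)\right]=\mathbb{E}_{G^\star}\!\left[\nabla_u H(u,\varepsilon)\right].
\]
Since $\nabla_u H(u,\varepsilon)=e_{j^\star(\varepsilon)}$ with $j^\star(\varepsilon)=\arg\max_{j}\{u_j+\varepsilon_j\}$ almost everywhere, the $k$-th component equals $\mathbb{P}_{G^\star}(k=\arg\max_j\{u_j+\varepsilon_j\})=p^\star_k(u)$, giving (ii).

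The main technical obstacle is handling the nondifferentiability of $H(u,\cdot)$ on ties and justifying interchange of $\nabla_u$ with $\mathbb{E}_F$. I would dispatch the tie issue by noting that $G^\star\ll F$ and $F$ is absolutely continuous (Assumption \ref{RUM_conditions}), so $\{\varepsilon:\#\arg\max_j(u_j+\varepsilon_j)\geq 2\}$ has $G^\star$-measure zero and $H$ is $G^\star$-a.e. differentiable in $u$ with $\nabla_u H=e_{j^\star(\varepsilon)}$. To justify differentiation under the integral I would produce an integrable envelope for $(u,\varepsilon)\mapsto \phi^{\ast\prime}((H(u,\varepsilon)-\mu^\star)/\lambda^\star)\nabla_u H(u,\varepsilon)$ on a neighborhood of $u$, using that $\|\nabla_u H\|_\infty\le 1$ and that $\phi^{\ast\prime}((H-\mu^\star)/\lambda^\star)$ is the $F$-integrable density $g^\star/f$ at the optimum (and admits a continuous majorant locally by continuity of $\lambda^\star(\cdot),\mu^\star(\cdot)$, which follows from the strict convexity in Lemma \ref{uniqueness_convexity_RSCM}). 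With these ingredients the dominated convergence theorem yields the differentiation, completing the proof.
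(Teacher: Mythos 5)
Your proposal is correct and follows essentially the same route as the paper: part (i) via the first-order/KKT conditions of the dual program \eqref{RSCM_Duality} combined with the robust density from Lemma \ref{optimal_mu}, and part (ii) via an envelope (Danskin-type) differentiation of $\Psi(u,\lambda^\star,\mu^\star)$ under the expectation. The only difference is that you make explicit the technical points the paper leaves implicit (primal feasibility of $L^\star$, measure-zero ties, and the dominated-convergence justification for differentiating under the integral), which strengthens rather than changes the argument.
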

 \proof (i) To show the first part, let us define the function $\Psi(\cdot)$ as follows $\Psi(\lambda,\mu)\triangleq\lambda\rho+\mu+\lambda\EE_{F}\left(\phi^*\left({H(u,\varepsilon)-\mu\over \lambda}\right)\right)$. Optimizing $\Psi(\lambda,\mu)$ with respect to $\lambda$ and $\mu$ we get
 {\footnotesize
 \begin{eqnarray}
 {\partial \Psi(\lambda,
 	\mu)\over \partial \lambda}	&=&\rho+\EE_{F}\left(\phi^*\left({H(u,\varepsilon)-\mu\over \lambda}\right)\right) \\ &+& \lambda\EE_{F}\left(\phi^{\ast\prime}\left({H(u,\varepsilon)-\mu\over \lambda}\right)\left({H(u,\varepsilon)-\mu\over -\lambda^2}\right)\right)=0\label{FOC1}\nonumber\\
  {\partial \Psi(\lambda,
 	\mu)\over \partial \mu}	&=&1-\EE_{F}\left(\phi^{\ast\prime}\left({H(u,\varepsilon)-\mu\over \lambda}\right)\right)=0\nonumber\label{FOC2}
 \end{eqnarray}
}

Rearranging the  first equation, we have: 
 \begin{align*}
 &\lambda\rho+\lambda\EE_{F}\left(\phi^*\left({H(u,\varepsilon)-\mu\over \lambda}\right)\right)+\mu\EE_{F}\left(\phi^{\ast\prime}\left({H(u,\varepsilon)-\mu\over \lambda}\right)\right) \\ =&\EE_{F}\left(\phi^{\ast\prime}\left({H(u,\varepsilon)-\mu\over \lambda}\right)H(u,\varepsilon)\right).
 \end{align*}
 Similarly, in the second equation, we have:
 
 $$\EE_{F}\left(\phi^{\ast\prime}\left({H(u,\varepsilon)-\mu\over \lambda}\right)\right)=1$$

Combining both expressions we find that the optimal $\lambda^\star$  and $\mu^\star$ must satisfy:
$$\lambda^\star\rho+\lambda^\star\EE_{F}\left(\phi^*\left({H(u,\varepsilon)-\mu^\star\over \lambda^\star}\right)\right)+\mu^\star=\EE_{F}\left(\phi^{\ast\prime}\left({H(u,\varepsilon)-\mu^\star\over \lambda^\star}\right)H(u,\varepsilon)\right).$$

Using expression (\ref{Optimal_Density}) in Lemma   \ref{optimal_mu}, we obtain:
 $$\Psi(\lambda^\star,\mu^\star)=\EE_{F}\left(\phi^{\ast\prime}\left({H(u,\varepsilon)-\mu^\star\over \lambda^\star}\right)H(u,\varepsilon)\right)=\EE_{G^\star}\left(\max_{j\in \mathcal{J}}\{ u_j+\varepsilon_j\}\right). $$

Hence, we conclude that $$W^{DRO}(u)=\EE_{G^\star}\left(\max_{j\in J}\{ u_j+\varepsilon_j\}\right).$$
\smallskip
 
 (ii) To show that $\nabla W^{DRO}(u)=p^\star(u)$, we note that using the optimized value $\Psi(\lambda^\star,\mu^\star)=\lambda^\star\rho+\mu^\star+\lambda^\star\EE_{F}\left(\phi^*\left({H(u,\varepsilon)-\mu^\star\over \lambda^\star}\right)\right)$ we get:
 \begin{eqnarray}
 	{\partial W^{DRO}(u)\over \partial u_j}&=&{\partial \Psi(\lambda^\star,\mu^\star)\over \partial u_j}\nonumber\\
&=&\int_{\varepsilon\in\RR^{J+1}}\left(\phi^{\ast\prime}\left({H(u,\varepsilon)-\mu^\star\over \lambda^\star}\right){\partial H(u,\varepsilon)\over \partial u_j}\right)f(\varepsilon)d\varepsilon \nonumber\\
 	&=&\EE_{G^\star}\left({\partial H(u,\varepsilon)\over \partial u_j}\right) \nonumber\\
 	&=&p^\star_j(u).\nonumber
 \end{eqnarray}

As previous result holds for all $j\in \mathcal{J}$, we conclude that $\nabla W^{DRO}(u)=p^\star(u)$.
 \eproof
 
 Part (i) of the theorem establishes that  given the optimal solutions $\lambda^\star$ and $\mu^\star$, the surplus function $W^{DRO}(u)$ takes the familiar expected maximum form that characterizes the  RUM (see  Eq.\eqref{SSF}). The main difference between the characterization in part (i) and the surplus functions from RUM is that expression \eqref{Optimized_RSCM}  corresponds to the expectation with respect to the robust distribution $G^\star$. Part (ii)  shows that the gradient of $W^{DRO}(u)$ yields the choice probability vector $p^\star(u)$. This latter result generalizes the WDZ to environments where the nominal distribution  $F$ may be misspecified or incorrect. In other words, Theorem \ref{Robust_WDZ} shows that the DRO-RUM preserves the expected maximum form and the gradient structure of the popular RUM.

 \section{Empirical content of the DRO-RUM}\label{s4:empirical_content}
 In this section, we discuss the empirical content of the DRO-RUM.
 In particular, we show how our approach is suitable to recover the mean utility vector allowing for uncertainty about the true distribution generating $\varepsilon$.  
 \smallskip
 
 To gain some intuition, consider a situation where the choice probability vector $p$ is observed from market data. Then the analyst's goal is to find a vector $u$ that rationalizes the observed  $p$. Following \cite{Berry1994}, this problem is known as the \emph{demand inversion}. In particular,  \cite{Berry1994} shows that in the case of the MNL $u$ satisfy the following $$p_j={e^{u_j}\over 1+\sum_{j^\prime=1}^Je^{u_{j^\prime}}}\quad\mbox{for $j=1,\ldots, J.$}$$
 and 
 $$p_0={1\over 1+\sum_{j^\prime=1}^Je^{u_{j^\prime}}}$$
 
 Then  using the previous expressions, we can solve  for the mean utility vector $u$ as a function of $p$: 
 $$\log(p_j/p_0)=u_j\quad\mbox{for $j=1,\ldots,J$}.$$
 In other words, we can express $u$ in terms of the observed choice probability vector $p$.
 \smallskip
 
 We can use a similar argument to find the vector $u$ in the case of the nested logit, the random coefficient  MNL model (\cite{Berry1994, BerryLevinsohnPakes1995}),  and in the case of the inverse product differentiation logit model of \cite{Fosgerauetal2022}. For general RUMs beyond the MNL and its variants, \cite{Galichon_Salanie_2021} develops a general approach based on convex duality and mass transportation techniques. They show that for any \emph{fixed} distribution of $\varepsilon$ the mean utility vector $u$ is identified from the observed choice probability $p$.
 
 This section aims to show that the DRO-RUM  can be used to study the demand inversion problem in environments where the analyst does not know the true distribution of $\varepsilon$. Thus, our approach allows us to identify $u$ under misspecification of the distribution governing the realizations of $\varepsilon$.
 
\subsection{Robust demand inversion} Our main result uses  a distributionally robust version of the Fenchel equality for discrete choice models. In order to establish this result, we define $\mathcal{U}\triangleq \{ u\in \RR^{J+1}: u_0=0\}$. In other words, $\mathcal{U}$  is the set of mean utility vectors with the normalization $u_0=0$ for the outside option.
 Our first step is to understand the properties of the  convex conjugate of $W^{DRO}(u)$: 
 \begin{equation}\label{con.conj}
 	{W^{(DRO)}}^*(p) = \sup_{u \in \mathcal{U}} \left\{\langle u,p \rangle - W^{DRO}(u)\right\}.
 \end{equation}  
In particular, we are interested in understanding the behavior of $W^{(DRO)^*}(p)$   on its  effective domain of:
 \[
 \mbox{dom}\, {W^{(DRO)}}^* = \left\{p \in \mathbb{R}^{J+1} \,|\,  {W^{DRO}}^*(p) < \infty \right\}.
 \]

 The following lemma plays a key role in our analysis.
 \begin{lemma}\label{strict_convexity_w_dro}
 	Let Assumptions \ref{RUM_conditions} and \ref{phi_diffferentiable_Assumption} hold.  Then $W^{DRO}(u)$ is strictly convex in $u$.
 \end{lemma}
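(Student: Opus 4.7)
The plan is to exploit the primal-attained form of the surplus from Theorem \ref{Robust_WDZ}(i): for each $u\in\mathcal{U}$, $W^{DRO}(u)=\EE_{G^\star_u}[H(u,\varepsilon)]$, where the optimal distribution $G^\star_u$ is unique by Lemma \ref{uniqueness_convexity_RSCM} with density $g^\star_u$ given by Lemma \ref{optimal_mu}. Fix $u\neq v$ in $\mathcal{U}$ and $\alpha\in(0,1)$, set $w=\alpha u+(1-\alpha)v$, and consider the chain
\[
W^{DRO}(w)=\EE_{G^\star_w}[H(w,\varepsilon)]\leq \alpha\,\EE_{G^\star_w}[H(u,\varepsilon)]+(1-\alpha)\,\EE_{G^\star_w}[H(v,\varepsilon)]\leq \alpha\,W^{DRO}(u)+(1-\alpha)\,W^{DRO}(v),
\]
where the first step uses the pointwise convexity of the $\max$ function and the second uses that $G^\star_w\in\mathcal{M}_\phi(F)$ is feasible in the suprema defining $W^{DRO}(u)$ and $W^{DRO}(v)$. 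This chain is precisely the one behind Lemma \ref{lem:surplus}(iv); to obtain strict convexity it only remains to upgrade the first inequality to a strict one.

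For each $\varepsilon$ the inequality $H(w,\varepsilon)<\alpha H(u,\varepsilon)+(1-\alpha)H(v,\varepsilon)$ holds iff the argmax sets $\arg\max_i\{u_i+\varepsilon_i\}$ and $\arg\max_i\{v_i+\varepsilon_i\}$ are disjoint; denote the collection of such $\varepsilon$ by $S(u,v)$. Since $u\neq v$ and $u_0=v_0=0$, the map $i\mapsto u_i-v_i$ is non-constant, so the open regions $A_j=\{\varepsilon:\varepsilon_j+u_j>\varepsilon_i+u_i\ \forall i\neq j\}$ and $B_j=\{\varepsilon:\varepsilon_j+v_j>\varepsilon_i+v_i\ \forall i\neq j\}$ cannot coincide for every $j\in\mathcal{J}$: equality $A_j=B_j$ for all $j$ would force $u_i-u_j=v_i-v_j$ for all $i,j$, hence $u=v$. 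Combined with the absolute continuity and full support of $F$ (Assumption \ref{RUM_conditions}), this forces $F(S(u,v))>0$.

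The remaining and most delicate step is transferring this positivity from $F$ to $G^\star_w$. By Lemma \ref{optimal_mu}, $g^\star_w(\varepsilon)=\phi^{\ast\prime}\!\left((H(w,\varepsilon)-\mu^\star)/\lambda^\star\right)f(\varepsilon)$; Assumption \ref{phi_diffferentiable_Assumption} makes $\phi^{\ast\prime}$ strictly increasing and nonnegative, so its zero set is either empty or a single point $s_0$. Consequently $\{g^\star_w=0\}$ is contained in the level set $\{\varepsilon:H(w,\varepsilon)=\mu^\star+\lambda^\star s_0\}$ of the continuous piecewise-affine non-constant function $H(w,\cdot)$, which is a finite union of pieces of hyperplanes in $\RR^{J+1}$ and hence has Lebesgue measure zero, and thus $F$-measure zero. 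Therefore $g^\star_w>0$ $F$-almost everywhere, the measures $G^\star_w$ and $F$ are mutually equivalent, and $G^\star_w(S(u,v))>0$. Substituting into the display yields $W^{DRO}(w)<\alpha W^{DRO}(u)+(1-\alpha)W^{DRO}(v)$, establishing strict convexity. The main obstacle is precisely this last transfer: without the strict convexity of $\phi^\ast$ guaranteed by Assumption \ref{phi_diffferentiable_Assumption}, $g^\star_w$ could vanish on a set of positive measure disjoint from $S(u,v)$ and the argument would collapse.
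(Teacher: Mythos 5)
Your proof is correct, but it takes a genuinely different route from the paper's. The paper never leaves the dual representation of Proposition \ref{RSCM_Characterization}: for fixed $(\lambda,\mu)$ it shows that $u\mapsto\lambda\,\EE_F\bigl[\phi^*\bigl((H(u,\varepsilon)-\mu)/\lambda\bigr)\bigr]$ is strictly convex by chaining the convexity of $H(\cdot,\varepsilon)$ with the monotonicity of $\phi^*$, and then the strict convexity of $\phi^*$ with the monotonicity of the expectation. You instead argue in the primal: you evaluate $W^{DRO}$ at the midpoint via Theorem \ref{Robust_WDZ}(i) using the optimal distribution $G^\star_w$, exploit the feasibility of $G^\star_w$ in the suprema defining $W^{DRO}(u)$ and $W^{DRO}(v)$, and reduce strictness to showing that the argmax-disagreement set has positive $G^\star_w$-measure, which you extract from the density formula of Lemma \ref{optimal_mu}. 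Your version makes explicit two points that the paper's two-line argument leaves implicit: first, that the strict inequality furnished by the strict convexity of $\phi^*$ is only available on the set where $H(u,\varepsilon)\neq H(v,\varepsilon)$, a set that must be shown to have positive measure and that is empty when $u-v$ is constant --- which is exactly why the restriction to $\mathcal{U}$ (i.e.\ $u_0=v_0=0$) is needed, since $W^{DRO}(u+c\cdot e)=W^{DRO}(u)+c$ is affine along $e$; second, that strict convexity survives the optimization over the remaining variables, which you handle cleanly by pinning the optimizer at the midpoint rather than at the endpoints. The price is that you rely on Theorem \ref{Robust_WDZ} and on the density formula of Lemma \ref{optimal_mu} (hence on $\lambda^\star>0$ and the first-order conditions), none of which the paper's dual argument needs; the paper's route is shorter and self-contained within the dual program, at the cost of glossing over the two issues you address.
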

\proof For given $\lambda$ and $\mu$, for $u_1, u_2$ and  $\alpha \in \left(0,1\right)$ we have
\begin{align*}
&\lambda\EE_F\left(\phi^*\left({H(\alpha u_1 + \left(1 - \alpha\right) u_2,\varepsilon)-\mu\over \lambda}\right)\right) \\  \overset{(\star)}{\leq} \lambda&\EE_F\left(\phi^*\left( {\alpha H(u_1,\varepsilon) + \left(1 - \alpha\right) H(u_2,\varepsilon)-\mu\over \lambda}\right)\right), 
\end{align*}
where $\left(\star\right)$ holds due to the convexity of $H$ and the monotonicity of $\phi^*$ due to Assumption \ref{phi_diffferentiable_Assumption}. Exploiting the strict convexity of $\phi^*$ and the linearity and monotonicity of the expectation operator further yields:
\begin{align*}
	&\lambda\EE_F\left(\phi^*\left( {\alpha H(u_1,\varepsilon) + \left(1 - \alpha\right) H(u_2,\varepsilon)-\mu\over \lambda}\right)\right) \\ < \, &\alpha\lambda \EE_F\left(\phi^*\left(  {H(u_1,\varepsilon)-\mu\over \lambda}\right) \right) + \left(1-\alpha\right)\lambda \EE_F\left(\phi^*\left(  {H(u_2,\varepsilon)-\mu\over \lambda}\right) \right).
\end{align*}
Thus it follows that $W^{DRO}(u)$ is strictly convex in $u$. \eproof
\smallskip

The following theorem establishes the continuity and smoothness of $W^{(DRO)^*}$.

 \begin{theorem}\label{Robust_W_conjugate} Let Assumptions \ref{RUM_conditions} and \ref{phi_diffferentiable_Assumption} hold. The convex conjugate  $W^{(DRO)^*}$ is continuous on its domain $\mbox{dom}\, {W^{(DRO)}}^*$ which coincides with the probability simplex $\Delta_{J+1}$. Furthermore, $W^{(DRO)^*}$  is  continuously differentiable on $\mbox{int dom}\, W^{(DRO)^*}$.
 \end{theorem}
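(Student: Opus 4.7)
The plan is to establish the theorem in three stages: first characterize $\mbox{dom}\, W^{(DRO)*}$ as the probability simplex $\Delta_{J+1}$, then derive continuity on that closed domain, and finally upgrade to continuous differentiability on its interior using Lemma \ref{strict_convexity_w_dro}.

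For the domain characterization, the easy inclusion $\Delta_{J+1} \subseteq \mbox{dom}\, W^{(DRO)*}$ follows from Lemma \ref{lem:surplus}(iii). Fix $p \in \Delta_{J+1}$; for any $u \in \mathcal{U}$ the constraint $u_0 = 0$ forces $\max_{j\in\mathcal{J}} u_j \geq 0$, and since $p$ has nonnegative entries summing to one, $\langle u,p\rangle \leq \max_{j} u_j$. Combining with $W^{DRO}(u) \geq \max_j u_j + \min_j \EE_F[\varepsilon_j]$ from Lemma \ref{lem:surplus}(iii) gives $\langle u,p\rangle - W^{DRO}(u) \leq -\min_j \EE_F[\varepsilon_j]$ uniformly in $u$, so $W^{(DRO)*}(p) < \infty$. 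For the reverse inclusion I would first derive the companion upper bound $W^{DRO}(u) \leq \max_j u_j + W^{DRO}(0)$, obtained by inserting $\max_j\{u_j+\varepsilon_j\} \leq \max_j u_j + \max_j \varepsilon_j$ under any $G \in \mathcal{M}_\phi(F)$ and then taking the supremum over $\mathcal{M}_\phi(F)$. If $p_k < 0$ for some $k \geq 1$, then $u_n = -n \cdot e_k$ satisfies $\max_j (u_n)_j = 0$, whence $\langle u_n,p\rangle - W^{DRO}(u_n) \geq -n p_k - W^{DRO}(0) \to \infty$. If $\sum_{j\geq 1} p_j \neq 1$, picking $u_n = n s \cdot (0,1,\ldots,1)$ with $s \in \{-1,+1\}$ matching the sign of $\sum_{j\geq 1} p_j - 1$, and then invoking translation invariance and monotonicity (Lemma \ref{lem:surplus}(i)--(ii)) to obtain $W^{DRO}(u_n) \leq n + W^{DRO}(0)$, forces the same divergence.

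For continuity on the whole of $\Delta_{J+1}$, observe that $W^{(DRO)*}$ is lower semicontinuous on $\RR^{J+1}$ as the pointwise supremum of continuous affine functions. On the relative interior of the simplex it is automatically continuous as a finite convex function on a relatively open convex set (\cite[Thm. 10.1]{Rockafellar1970}). The uniform bound from the previous paragraph shows $W^{(DRO)*}$ is bounded above on the compact polytope $\Delta_{J+1}$; convexity together with \cite[Thm. 10.2]{Rockafellar1970} then extends this to Lipschitz continuity on compact subsets and, combined with lower semicontinuity, continuity up to the boundary.

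For continuous differentiability on $\mbox{int dom}\, W^{(DRO)*}$ I invoke Lemma \ref{strict_convexity_w_dro}: $W^{DRO}$ is closed, proper, and globally strictly convex on $\RR^{J+1}$, hence essentially strictly convex. By \cite[Thm. 26.3]{Rockafellar1970} the conjugate $W^{(DRO)*}$ is essentially smooth, and in particular differentiable at every point of the interior of its domain. Since a convex function that is everywhere differentiable on an open set is automatically $C^1$ there (\cite[Thm. 25.5]{Rockafellar1970}), the claim follows. The main obstacle is the reverse inclusion in the domain characterization: the normalization $u_0 = 0$ blocks the cleanest application of translation invariance, so one must carefully combine the upper bound $W^{DRO}(u) \leq \max_j u_j + W^{DRO}(0)$ with Lemma \ref{lem:surplus}(i)--(ii) to squeeze the sum-to-one constraint out of the supremum; once the domain is pinned down, both continuity and $C^1$ regularity follow from standard convex analysis together with the strict convexity already established.
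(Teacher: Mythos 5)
Your forward inclusion $\Delta_{J+1}\subseteq\mbox{dom}\,W^{(DRO)^*}$, your continuity argument, and your passage from Lemma \ref{strict_convexity_w_dro} to differentiability of the conjugate are all sound and essentially coincide with the paper's proof (the paper invokes Gale--Klee--Rockafellar and \citet[Thm. 4.1.1]{Hiriart_Urruty_Lemarechal_1993} where you invoke Theorems 10.2, 26.3 and 25.5 of \citet{Rockafellar1970}; these are interchangeable). The genuine gap is in the reverse inclusion. You correctly sense that the normalization $u_0=0$ obstructs the translation argument, but your repair does not close it: with $s=-1$ and $u_n=(0,-n,\dots,-n)$ your estimates give only
\[
\langle u_n,p\rangle-W^{DRO}(u_n)\;\geq\;-n\sum_{j\geq1}p_j-W^{DRO}(0),
\]
which diverges when $\sum_{j\geq1}p_j<0$ but stays bounded whenever $0\leq\sum_{j\geq1}p_j<1$. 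No other sequence in $\mathcal{U}$ can rescue this case: by Lemma \ref{lem:surplus}(iii), every $u\in\mathcal{U}$ satisfies $W^{DRO}(u)\geq u_0+\min_j\EE_F[\varepsilon_j]=\min_j\EE_F[\varepsilon_j]$, so for example $W^{(DRO)^*}(0)=-\inf_{u\in\mathcal{U}}W^{DRO}(u)\leq-\min_j\EE_F[\varepsilon_j]<\infty$, i.e. $p=0$ belongs to the domain of the conjugate taken over $\mathcal{U}$ although $0\notin\Delta_{J+1}$. For the same reason your exclusion of negative coordinates only reaches $k\geq1$: since $u_0=0$, the coordinate $p_0$ never enters $\langle u,p\rangle$ and cannot be pinned down by any choice of $u\in\mathcal{U}$.

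The paper's proof gets the full simplex because it takes the supremum along the entire line $\bar u+\gamma e$, $\gamma\in\RR$, and along $\gamma e_i$ for every coordinate $i$ including $i=0$; that is, it effectively computes the conjugate over all of $\RR^{J+1}$ rather than over the slice $\mathcal{U}$, and under that reading $\sup_{\gamma}\gamma(\langle e,p\rangle-1)=\infty$ for $\langle e,p\rangle\neq1$ does the work. If you insist on restricting the supremum to $\mathcal{U}$, the domain is independent of $p_0$ and contains points with $\sum_{j\geq1}p_j<1$, so the statement must be read with the unrestricted conjugate; your proof should either adopt that convention explicitly or restate the domain accordingly. A secondary point: your auxiliary bound $W^{DRO}(u)\leq\max_ju_j+W^{DRO}(0)$ presupposes $W^{DRO}(0)<\infty$, which the paper's own footnote warns can fail for large $\rho$; the paper's exclusion argument needs only finiteness of $W^{DRO}$ at a single point together with Lemma \ref{lem:surplus}(i)--(ii), so the divergence estimates are better phrased without any upper bound on $W^{DRO}$.
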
	
 \begin{proof}
 	Let us first show that $\mbox{dom}\, {W^{(DRO)}}^* \subseteq \Delta_{J+1}$. Fix a utility vector $\bar{u}$ and take any $p \in \mathbb{R}^{J+1}$ with $\langle p,e \rangle \neq 1$. Then, using Lemma\ref{lem:surplus}(iii) we have 	
 	\begin{align*}
 	&{W^{(DRO)}}^*(p) \geq \sup_{\gamma \in \mathbb{R}} \langle p,\bar{u}+\gamma\cdot e\rangle - {W^{DRO}}(\bar{u}+\gamma\cdot e) \\ \overset{(iii)}{=} &\langle p,\bar{u}\rangle - {W^{DRO}}(\bar{u})   + \sup_{\gamma \in \mathbb{R}}\gamma\left(\langle e,p\rangle - 1\right) = \infty.
 	\end{align*}
 	Next, we take any vector $p \in \mathbb{R}^{J+1}$ with $p_i < 0$ for some $i \in \left\{0,1, \ldots, J\right\}$. By Lemma \ref{lem:surplus} (ii), it follows that
 	\[
 	{W^{(DRO)}}^*(p) \geq  \sup_{\gamma <0} \langle p, \gamma\cdot e_i\rangle - {W^{DRO}}(e_i) \overset{(ii)}{\geq} \sup_{\gamma <0} \gamma\cdot p_i,  - {W^{DRO}}(0) = \infty.
 	\]
 	Hence, it remains to prove the reverse implication, i.\,e. $\Delta_{J+1}\subseteq \mbox{dom}\, {W^{(DRO)}}^*$. Therefore, we derive an upper bound for the convex conjugate on the simplex:
 \begin{align*}
 	&\sup_{p \in \Delta_{J+1}}  	{W^{(DRO)}}^*(p) = \sup_{p \in \Delta_{J+1}} \left(\sup_{u \in \mathcal{U}} \langle p, u \rangle - W^{DRO}(u) \right) \\ = &\sup_{u \in \mathcal{U}} \left(\sup_{p \in \Delta_{J+1}} \langle p, u \rangle - W^{DRO}(u) \right).
 \end{align*}
 	We apply (iii) from Lemma \ref{lem:surplus} which yields
 	\[
 	\sup_{u \in \mathcal{U}} \left(\sup_{p \in \Delta_{J+1}} \langle p, u \rangle - W^{DRO}(u) \right) = \sup_{u \in \mathcal{U}}  \left(\max_{i \in\mathcal{J}} u_i - W^{DRO}(u) \right) \leq -\min_{i\in\mathcal{ J}} \mathbb{E}_{F}\left[\varepsilon_i\right].
 	\]
 	Thus, the domain coincides with the simplex.
 	For the continuity, we first observe that ${W^{(DRO)}}^*$ is convex, and hence it is continuous on the relative interior of its domain. The Gale-Klee-Rockafellar theorem provides upper semi-continuity of ${W^{(DRO)}}^*$ if the domain is polyhedral, which it is \citep{Rockafellar1970}. Furthermore, convex conjugates are always lower semi-continuous, and hence continuity follows. In order to establish that $W^{(DRO)^*}$  is continuously differentiable, we note that Lemma \ref{strict_convexity_w_dro}  shows that  $W^{(DRO)^*}$   is strictly convex in $u$. Then by \citet[Thm. 4.1.1]{Hiriart_Urruty_Lemarechal_1993} we know that the strict convexity of $W^{DRO}(u)$ implies that $W^{(DRO)^*}(p)$ is continuously differentiable on $\mbox{int} \left( \mbox{dom}\,W^{(DRO)^*}\right)$.	\eproof
 \end{proof}
 
 The previous result is key in our goal of identifying the mean utilities. To see this we note that thanks  to Theorem \ref{Robust_WDZ} we know that  for alternative $j\in \mathcal{J}$ :
 $$p_j={\partial W^{DRO}(u)\over \partial u_j}$$
 Furthermore, from Theorem \ref{Robust_W_conjugate} we get:
 $$u_j={\partial W^{(DRO)^*}(p)\over \partial p_j}$$
 where $u$  achieves the maximum in (\ref{con.conj}). Then by Fenchel's duality theorem, we know that these two conditions are equivalent. Then, given the  \emph{robust} distribution $G^\star$,  we conclude that $u$ is identified from $p$. In other words, we can find a vector $u$ that rationalizes the observed choice probability vector $p$.
 \smallskip
 
 The following result establishes the empirical content of the DRO-RUM.

\begin{theorem}\label{Robust_Fenchel} Let Assumptions \ref{RUM_conditions} and \ref{phi_diffferentiable_Assumption} hold. Then, the following statements are equivalent:
	\begin{itemize}
		\item[(i)] The choice probability vector $p\in \Delta_{J+1}$ satisfies:
		\begin{equation}\label{Fenchel1}
		p=\nabla W^{DRO}(u).
		\end{equation}
		\item[(ii)] The deterministic utility vector $u\in \mathcal{U}$ satisfies:
		\begin{equation}
		u=\nabla W^{(DRO)^*}(p).
		\end{equation}
	\item[(iii)] $(u^\star,\lambda^\star,\mu^\star)$ is the  unique solution to the strictly convex optimization problem:
	\begin{equation}\label{Demand_Inversion_equivalence}
	-W^{(DRO)^\ast}(p)=	\inf_{u\in\mathcal{U},\lambda\in \RR_+,\mu\in \RR}\left\{\lambda\rho+\mu+\lambda\EE_F\left(\phi^*\left({H(u,\varepsilon)-\mu\over \lambda}\right)\right) -\langle p,u\rangle \right\}.
	\end{equation}
	\end{itemize}
	\end{theorem}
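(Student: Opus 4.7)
The plan is to build the three-way equivalence on top of the Fenchel duality machinery that the previous results have already set up, using Proposition~\ref{RSCM_Characterization} to rewrite the variational problem defining $-W^{(DRO)^*}(p)$ as the joint minimization in (iii). The structure of the argument is: (i)$\Leftrightarrow$(ii) by classical Fenchel duality, and (i)$\Leftrightarrow$(iii) (or equivalently (ii)$\Leftrightarrow$(iii)) by combining the inner minimization defining $W^{DRO}(u)$ with the outer minimization defining the conjugate, plus a strict-convexity argument for uniqueness.

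For (i)$\Leftrightarrow$(ii), I would invoke the standard Fenchel equivalence $p \in \partial W^{DRO}(u) \Leftrightarrow u \in \partial W^{(DRO)^*}(p)$, which applies since Lemma~\ref{strict_convexity_w_dro} and Theorem~\ref{Robust_W_conjugate} together show that $W^{DRO}$ is proper, convex, and lower semicontinuous (in fact strictly convex), so $W^{DRO} = W^{(DRO)^{**}}$. Theorem~\ref{Robust_WDZ}(ii) yields differentiability of $W^{DRO}$, hence $\partial W^{DRO}(u) = \{\nabla W^{DRO}(u)\}$; Theorem~\ref{Robust_W_conjugate} yields continuous differentiability of $W^{(DRO)^*}$ on the interior of its domain, so $\partial W^{(DRO)^*}(p) = \{\nabla W^{(DRO)^*}(p)\}$. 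Hence the subdifferential equivalence collapses to the gradient equations in (i) and (ii).

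For (i)$\Leftrightarrow$(iii), I would start from the definition of the convex conjugate:
\[
-W^{(DRO)^*}(p) = \inf_{u \in \mathcal{U}} \bigl\{ W^{DRO}(u) - \langle p,u\rangle \bigr\},
\]
and plug in the characterization of $W^{DRO}(u)$ from Proposition~\ref{RSCM_Characterization}:
\[
W^{DRO}(u) = \inf_{\lambda \geq 0,\, \mu\in\RR} \Bigl\{\lambda\rho + \mu + \lambda\,\EE_F\bigl[\phi^*\bigl((H(u,\varepsilon)-\mu)/\lambda\bigr)\bigr]\Bigr\}.
\]
Since the inner infimum does not depend on $p$ and the outer functional is separable, exchanging the order of infima yields exactly the right-hand side of \eqref{Demand_Inversion_equivalence}. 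Conversely, a minimizer $(u^\star,\lambda^\star,\mu^\star)$ of \eqref{Demand_Inversion_equivalence} provides both a minimizer $u^\star$ of $W^{DRO}(u) - \langle p,u\rangle$ (so $p = \nabla W^{DRO}(u^\star)$ by the first-order condition, giving (i)) and a minimizer $(\lambda^\star,\mu^\star)$ of the inner Lagrangian problem at $u^\star$, which by Lemma~\ref{optimal_mu} identifies the robust density $g^\star$.

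The main obstacle — and really the only non-routine step — is \textbf{uniqueness} of the triple $(u^\star,\lambda^\star,\mu^\star)$, since the three variables play different roles and joint strict convexity in $(u,\lambda,\mu)$ is not immediate from the separate strict convexity statements. I would handle it in two stages: (a)~for each fixed $u$, the inner problem in $(\lambda,\mu)$ has a unique minimizer by Lemma~\ref{uniqueness_convexity_RSCM}, so the value function $u \mapsto W^{DRO}(u)$ is well-defined and inherits the strict convexity shown in Lemma~\ref{strict_convexity_w_dro}; (b)~the outer problem $\min_{u\in\mathcal{U}}\{W^{DRO}(u) - \langle p,u\rangle\}$ is then the sum of a strictly convex function and a linear function, hence strictly convex, and so has at most one minimizer $u^\star$; existence of $u^\star$ on $\mbox{int dom}\,W^{(DRO)^*}$ follows from Theorem~\ref{Robust_W_conjugate} since $u^\star = \nabla W^{(DRO)^*}(p)$. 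Uniqueness of $(\lambda^\star,\mu^\star)$ then comes from Lemma~\ref{uniqueness_convexity_RSCM} applied at $u = u^\star$. This closes the chain of equivalences.
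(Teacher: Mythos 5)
Your proposal is correct and follows the same overall architecture as the paper's proof: (i)$\Leftrightarrow$(ii) via Fenchel equality backed by Theorem~\ref{Robust_W_conjugate}, and (iii) obtained by substituting the dual characterization of Proposition~\ref{RSCM_Characterization} into the definition of the conjugate and exchanging the order of the infima. The one place where you genuinely depart from the paper is the uniqueness step, and your version is the more careful one. The paper simply asserts that ``combining Lemmas~\ref{uniqueness_convexity_RSCM} and~\ref{strict_convexity_w_dro}'' the joint problem \eqref{Demand_Inversion_equivalence} is strictly convex in $(u,\lambda,\mu)$; but strict convexity in $u$ for fixed $(\lambda,\mu)$ together with strict convexity in $(\lambda,\mu)$ for fixed $u$ does not by itself yield joint strict convexity of the three-variable objective. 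Your two-stage marginalization --- first minimize out $(\lambda,\mu)$ to get the value function $W^{DRO}(u)$, use its strict convexity (Lemma~\ref{strict_convexity_w_dro}) to pin down a unique $u^\star$, then apply Lemma~\ref{uniqueness_convexity_RSCM} at $u=u^\star$ to pin down $(\lambda^\star,\mu^\star)$ --- sidesteps that gap and delivers uniqueness of the triple with only the hypotheses the paper actually establishes. The only caveat is that your argument, like the paper's, ultimately rests on Lemma~\ref{strict_convexity_w_dro} giving strict convexity of the infimal value function itself (not merely of the objective at fixed multipliers), so you are inheriting whatever care that lemma requires rather than resolving it.
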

\proof The equivalence of parts (i) and (ii) follows from Theorem \ref{Robust_W_conjugate},  which allows us to invoke  Fenchel equality to conclude the result. To show part (iii), let us look at $W^{(DRO)^*}(p)$. By definition, we know that 
$$W^{(DRO)^*}(p)=\sup_{u\in \mathcal{U}}\{\langle p,u\rangle -W^{DRO}(u) \}.$$
Proposition \ref{RSCM_Characterization} implies that the previous expression corresponds to
$$W^{(DRO)^*}(p)=\sup_{u\in\mathcal{U}}\left\{\langle p,u\rangle -\inf_{\lambda>0,\mu\in \RR}\left\{\lambda\rho+\mu+\lambda\EE_F\left(\phi^*\left({H(u,\varepsilon)-\mu\over \lambda}\right)\right)\right\} \right\}.$$
Equivalently, we have:
$$W^{(DRO)^*}(p)=- \inf_{u\in\mathcal{U},\lambda>0,\mu\in \RR}\left\{\lambda\rho+\mu+\lambda\EE_F\left(\phi^*\left({H(u,\varepsilon)-\mu\over \lambda}\right)\right)-\langle p,u\rangle\right\}. $$
Thus, we get:
$$-W^{(DRO)^*}(p)=\inf_{u\in\mathcal{U},\lambda>0,\mu\in\RR}\left\{\lambda\rho+\mu+\lambda\EE_F\left(\phi^*\left({H(u,\varepsilon)-\mu\over \lambda}\right)\right)-\langle p,u\rangle\right\}.$$
%which is a convex program in $u,\lambda$, and $\mu$.\\
Combining Lemmas \ref{uniqueness_convexity_RSCM} and \ref{strict_convexity_w_dro}  we get that the (\ref{Demand_Inversion_equivalence}) is strictly convex in $u,\lambda,$ and $\mu$. As a consequence, there exists a unique solution to the problem (\ref{Demand_Inversion_equivalence}).\eproof
\smallskip

As we discussed in the introduction of this section, for a fixed distribution of $\varepsilon$, parts (i) and (ii) have been established in the \cite{Galichon_Salanie_2021}. Our result differs from theirs in a fundamental aspect; we achieve the identification of the mean utility vector $u$, relaxing the assumption that the distribution of $\varepsilon$ is known. In other words, our result allows for nonparametric identification of $u$ under  (potential) misspecification of the shock distribution. Similarly, our result relates to dynamic discrete choice models' ``inversion'' approach.\footnote{ It is worth remarking that \cite{Chiong_Galichon_Shum_2016} apply \cite{Galichon_Salanie_2021}'s approach  widely used in dynamic discrete-choice models.}   For instance the papers by \cite{Hotz_Miller_1993} and \cite{Arcidiacono_Miller_2011}  establish that the mean utility vector $u$ can be recovered   as  $\nabla^{-1} W(p)=u$. Their approach only applies to the case of the MNL and GEV models. By exploiting convex optimization techniques, \cite{Fosgerauetal2021109911} extends \cite{Hotz_Miller_1993} and \cite{Arcidiacono_Miller_2011}'s inversion approach to models far beyond the GEV class. Similarly,  \cite{Li2018} considers a convex minimization algorithm to solve the demand inversion problem. He illustrates his method in the case of both the \cite{BerryLevinsohnPakes1995} random coefficient logit demand model and the \cite{BerryPakes2007} pure characteristics model. However,  \cite{Fosgerauetal2021109911} and \cite{Li2018}'s results only apply under the assumption that the distribution of $\varepsilon$ is known. In contrast, part (iii) establishes that given a choice probability vector $p$, we can identify the mean utility vector $u$ as the unique solution of the strictly convex optimization program (\ref{Demand_Inversion_equivalence}). This latter characterization captures the role of misspecification through the value of the Lagrange multipliers $\lambda^\star$ and $\mu^\star$. Thus, Theorem \ref{Robust_Fenchel} provides a distributionally robust nonparametric identification result.
\subsection{A robust random coefficient model} To see how Theorem \ref{Robust_Fenchel} can be applied, we analyze the random coefficient model assuming that the $\phi$-divergence corresponds to the  Kullback-Leibler distance. Following \cite{BerryLevinsohnPakes1995} and \cite{Galichon_Salanie_2021}, we consider a random coefficient model with  $\varepsilon=Z e+ T \eta$, where $e$ is a random vector on $\mathbb{R}^k$ with distribution $F_e$ ,  $Z$ is a $\left|\mathcal{J}\right| \times k$ matrix, $T>0$ is a scalar parameter, and $\eta$  is a vector of $|\mathcal{J}|$ Gumbel random variables, whose distribution function is $F_\eta$. Assume that $e$ and $\eta$ are statistically independent. 
Fixing the distributions $F_e$ and $F_\eta$, we  can use the iterated expectation, combined with the independence of $e$ and $\eta$ (\cite[Eqs. B.6-B.7]{Galichon_Salanie_2021}) we get that  

\begin{eqnarray}
	W(u)&=&\EE_{F_e}\left(\EE_{\eta} \left(\max_{j\in \mathcal{J}}\{u_j+(Ze)_{j}+T\eta_j\}\right)|e\right),\nonumber\\
	&=&\EE_{F_e}\left(W(u+Ze)\right),\nonumber
\end{eqnarray}
where $W(u+Ze)=\int_{\RR^{J+1}}\max_{j\in \mathcal{J}}\{u_j+(Ze)_{j}+T\eta_j\}f_{\eta}d\eta$. Using the fact that $\eta$ follows a Gumbel distribution, we find that
$$W(u+Ze)=T\log\left(\sum_{j\in \mathcal{J}}{e^{u_j+(Ze)_j\over T}}\right).$$
Let us assume that  $F_e$ approximates the true distribution generating $e$. Then we can define $W^{DRO}(u)$ as follows:
$$W^{DRO}(u)=\sup_{G_e\in\mathcal{M}_{\phi}(F_e)}\EE_{G_e}\left(T\log\left(\sum_{j\in\mathcal{J}}e^{{u_j+(Ze)_j\over T}}\right)\right)$$

To apply Theorem \ref{Robust_Fenchel}, we note that $H(u,\varepsilon)=T\log\left(\sum_{j\in\mathcal{J}}e^{{u_j+(Ze)_j\over T}}\right)$. Then, using the Kullback-Leibler distance, we have that for an observable choice probability vector $p$, the identified mean utility vector $u^\star$  corresponds to the solution of the following program:
\begin{equation}\label{Robust_RCM}
	-W^{(DRO)^\ast}(p)=\inf_{u\in \mathcal{U},\lambda>0}\left\{ \rho\lambda+\lambda\ln \EE_{F_e}\|e^{u+Ze}\|_{T^{-1}}-\langle p ,u\rangle\right\},
\end{equation}
  where $\|e^{u+Ze}\|_{T^{-1}}\triangleq \left(\sum_{j\in\mathcal{J}} e^{{u_j+(Ze)_j\over T}}\right)^T.$ 
  
  The program (\ref{Robust_RCM}) allows us to identify the mean utility vector enabling some degree of misspecification in the distribution of $e$. It is worth remarking that program (\ref{Robust_RCM}) is  fairly tractable, so we can use traditional stochastic programming algorithms to find its unique solution.
 \section{Numerical Experiments}\label{s5:numerical_experiments}
 In this section, we discuss numerical simulations of our approach. We compare the DRO-RUM with the MNL and MNP models.\footnote{ We recall that the MNP assumes that the error terms follow a  normal distribution with a specific variance-covariance matrix. } 
 \smallskip  
 
 Our main goal is to analyze the effect of the robustness index $\rho$ on the choice probabilities. We consider a scenario with four alternatives where $\mathcal{J}=\{0,1,2,3\}$. Our first parametrization of the utility vector $u$ is  $u=\left(0,1,2,2.1\right)^T$. Based on this specification, we proceed to calculate the choice probabilities. In the case of the MNL, the choice probabilities are computed via Eq. \eqref{MNL_Choice} , where the scale parameter equals one ( $\eta=1$). In addition, we set the location parameter of each Gumbel error is assumed to zero.  %We call this model {\it MNL}. 
 For the MNP, we consider two different parametrizations for the variance-covariance matrix  of the random error vectors;  $\mathcal{N}(0, \Sigma_1)$ and $\mathcal{N}(0, \Sigma_2)$ where
 \[
 \Sigma_1 = \left(\begin{array}{ccccc}
  1 & 0  & 0 &  0 \\
 0 & 1 & 0 & 0 \\
0   & 0 & 1 & 0  \\
 0 & 0  & 0 &  1\\
 \end{array} \right), \qquad  \Sigma_2 = \left(\begin{array}{ccccc}
 2 & -0.5  & 0.5 &  1.3 \\
 -0.5 & 2 & 0 & 0.15 \\
 0.5   & 0 & 2 & 1  \\
 1.3 & 0.15  & 1 &  2\\
 \end{array} \right).
 \]
We call the latter model MNP-dep and the former MNP-indep, as the random errors $\varepsilon^{(i)}$, $i=1, \ldots, 4$ are independent in the former model. We use 10,000,000 draws from the error vectors to stabilize the simulations to simulate the choice probabilities. \\ For the DRO-RUMs, we choose the Kulback-Leibler- divergence case presented in Example \ref{example:kld}. We assume that the error terms of the nominal distribution are iid Gumbel distributed with location parameter zero and scale parameter one. This yields a way to examine the behavior and numerical stability of the DRO-RUM, and the impact of $\rho$ on the choice probabilities. 
\smallskip

The robust choice probabilities are simulated similarly to the MNP models. However, for the case of DRO-RUM we have to generate samples from the distribution defined by the density \ref{eq:kld.dens}. 
First, the optimal $\lambda^\star$ in \eqref{KL_RSCM_1} as well as $\mathbb{E}_{F}\left(e^{H(u,\varepsilon)/\lambda^\star}\right)$ are estimated using $50,000,000$ simulations from $4$ iid Gumbel distributions. Based on the optimized parameters a higher dimensional acceptance-rejection algorithm provides an efficient sampling method. For performance, the code was written in Julia.\footnote{ The code can be found on Github under \texttt{https://github.com/rubsc/rejection\_DRO\_RUM}.}
\smallskip

We present the results in Table \ref{tab:prob}
\begin{table}[H]
	\begin{center}
		\begin{tabular}{c | c | c | c| c|}
		
			& Alternative $1$ & Alternative $2$ & Alternative $3$ & Alternative $4$ \\ \hline
			 MNL & $5.1885 \% $ & $ 14.1037 \%$ & $38.3379 \%$ & $42.3699 \%$ \\ \hline
			MNP-indep & $1.6243 \% $ & $ 10.2996 \%$ & $41.443 \%$ & $46.6331 \%$  \\ \hline
			MNP-dep& $1.7877 \% $ & $ 19.359 \%$ & $37.9993 \%$ & $40.854 \%$ \\ \hline
			$\rho = 0.1$ & $9.2391 \% $ & $ 18.4783 \%$ & $33.695 \%$ & $38.587 \%$ \\ \hline
			$\rho = 0.7$ & $13.2132 \% $ & $ 19.7823 \%$ & $31.6066 \%$ & $35.3979 \%$ \\ \hline
			$\rho = 1.3$ & $15.4725 \% $ & $ 21.5501 \%$ & $30.5269 \%$ & $32.4505 \%$ \\ \hline
			$\rho = 2.2$ & $18.1501 \% $ & $ 21.46 \%$ & $29.7536 \%$ & $30.6363 \%$ \\ \hline
			$\rho = 4.3$ & $21.5843 \% $ & $ 23.2006 \%$ & $27.1235 \%$ & $28.0916 \%$ 
		\\ 
		\end{tabular}
		\caption{Choice Probabilities for utility vector $u=\left(0,1,2,2.1\right)^T$}.\label{tab:prob}
	\end{center}
\end{table} 
 In the previous table, the first row displays the choice probability for the MNL. The second and third rows show the choice probabilities for the MNP-indep and MNP-dep. The fourth row shows the behavior of the DRO-RUM when $\rho=0.1$. For this parametrization, the DRO-RUM yields choice probabilities that are similar (not equal)  to the ones displayed by the MNP-dep. Similar behavior is observed for the case of $\rho=0.7$. 
 \smallskip
 
 Rows six to eight show the behavior of the DRO-RUM as we increase  $\rho$. As expected, as the value of  $\rho$ increases, the choice probabilities look similar to the uniform choice between alternatives. In particular, for the case of $\rho=4.3$ we note that DRO-RUM assigns probabilities similar to the uniform case. Intuitively, a large $\rho$ , represents a situation where the analyst is highly uncertain about the true distribution. Thus, her behavior is overly cautious and considers a large set of possible (and feasible) distributions. Hence,  when $\rho\longrightarrow\infty$, the analyst's best choice is to guess uniform probabilities.
 \smallskip
 
 Similarly, from the  DM's perspective, large values of $\rho$ indicate a cautious and flexible choice of the error term. Consequently, the random error term might follow a distribution that completely counteracts the deterministic utilities' effects and guarantees the same overall random utility for every alternative. Indeed, the robust surplus function \eqref{KL_RSCM} is strongly increasing with a larger index of robustness as shown in Figure \ref{fig:surplus}, where we plot the surplus function evaluated at $u=\left(0,1,2,2.1\right)^T$ for different values of $\rho$.

\begin{figure}[H]
	\includegraphics[width=\linewidth]{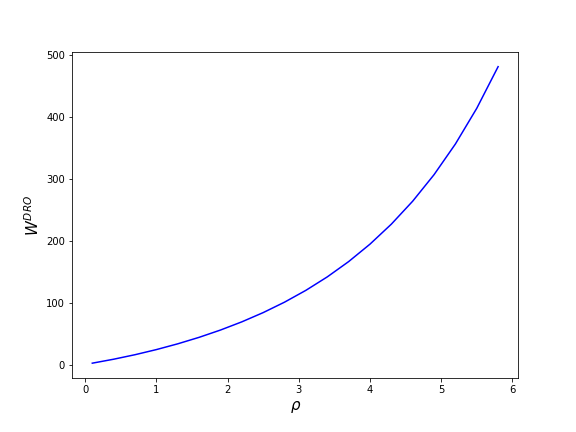}
	\caption{Robust surplus $W^{DRO}$ for  the utility vector $u$ and different values of robustness index $\rho$.}
	\label{fig:surplus}
\end{figure}
%\textcolor{red}{Discussion of the results.} 

\smallskip

A well-known pitfall of the  MNL model is that it satisfies the independence of irrelevant alternatives (IIA)  property. The IIA property establishes that the ratio between the probabilities of any two alternatives only depends on the differences between the utilities of these two alternatives. This property follows directly via formula \eqref{MNL_Choice}.  A direct implication of this fact is that when the deterministic utility of one alternative changes, the choice probabilities change proportionally so that the probability ratio between alternatives remains constant. 
In contrast, the DRO-RUM incorporates some dependence structure into the  MNL.\footnote{We recall that we are assuming that the nominal distribution is Gumbel.} Hence, it is interesting to simulate choice probabilities for a slight change in the deterministic utility vector. In Table \ref{tab:prob2}, we summarize the choice the probabilities for the alternatives with utility vector $\tilde{u}=\left(0,1,2,2.2\right)^T$.

\begin{table}[H]
	\begin{center}
		\begin{tabular}{c | c | c | c| c|}
			
			& Alternative $1$ & Alternative $2$ & Alternative $3$ & Alternative $4$ \\ \hline
			MNL & $4.9671 \% $ & $ 13.5021 \%$ & $36.7024 \%$ & $44.8284 \%$ \\ \hline
			MNP-indep & $1.4758 \% $ & $ 9.5821 \%$ & $39.2676 \%$ & $49.6745 \%$  \\ \hline
			MNP-dep & $1.5544 \% $ & $ 18.5851 \%$ & $36.0749 \%$ & $43.7856 \%$ \\ \hline
			$\rho = 0.1$ & $6.8788 \% $ & $ 15.0473 \%$ & $36.5434 \%$ & $41.5305 \%$ \\ \hline
			$\rho = 0.7$ & $13.2076 \% $ & $ 20.2437 \%$ & $30.8569 \%$ & $35.6918 \%$ \\ \hline
			$\rho = 1.3$ & $15.2938 \% $ & $ 21.5069 \%$ & $30.7281 \%$ & $32.4712 \%$ \\ \hline
			$\rho = 2.2$ & $17.9704 \% $ & $ 21.4406 \%$ & $29.5254 \%$ & $31.0636 \%$ \\ \hline
			$\rho = 4.3$ & $21.5509 \% $ & $ 23.264 \%$ & $27.4421 \%$ & $27.743 \%$ 
			\\ 
		\end{tabular}
		\caption{Choice Probabilities for utility vector $\tilde{u}=\left(0,1,2,2.2\right)^T$}\label{tab:prob2}
	\end{center}
\end{table} 

The violation of IIA, is visualized in Figure \ref{fig:prob}. Note that in the MNL, the decrease in choosing alternative $4$  evenly increases the probability of choosing one of the alternatives $1-3$, indicated by the dotted line. At the same time, the substitution patterns for the robust models are way more flexible. 

\begin{figure}[H]
	\includegraphics[width=\linewidth]{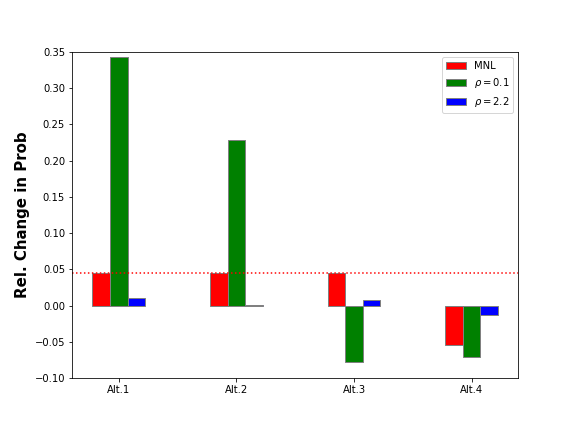}
	\caption{Relative change in probabilities if deterministic utility vector changes from $\tilde{u}$ to $u$.}
	\label{fig:prob}
\end{figure}

\section{Final remarks}\label{s6:final_remarks}
In this paper, we have introduced the DRO-RUM, which allows the shock distribution to be unknown or misspecified. We have shown that the DRO-RUM preserves the tractability and convex structure of the traditional RUM. Furthermore, we characterized the empirical content of the DRO-RUM, establishing that for an observed choice probability vector, there exists a unique mean utility vector that rationalizes the observed behavior in terms of a DRO-RUM. Finally, we showed the stability and numerical properties of our approach.
\smallskip

Several extensions are possible. First,  a natural question is about the econometric performance of the DRO-RUM using market data. This is in particular interesting, as our our approach provides the analyst with a rich class of various models. In fact, different models can be created by simply choosing a different $\phi$ - divergence and/or nominal distribution. In this context, it is also interesting to examine the impact of the robustness parameter $\rho$. 
\smallskip

 Second, results of RUM could be analyzed in the framework of robustness. For instance, the results in this paper could help study two-sided matching markets with transferable utility. Similarly, our results can help study robust identification in dynamic discrete choice models.
 
\smallskip

The algorithmic aspects of the DRO-RUM could be analyzed. Recently for example, a new family of prox-functions on the probability simplex based on discrete choice models has been introduced by \citet{muller2022discrete}. Hence, it is interesting to see if prox-functions can be generated from the DRO-RUM.

\smallskip

Additionally, theoretical extensions of the distributionally robust approach are conceivable. A natural way to do this is to rely on different statistical distance  concepts, e.\, g. Wasserstein-Distance, and analyze their tractability. Moreover, the properties of such other robust models could be compared with the DRO-RUM.

\begin{comment}
 \subsection{Optimizing tail perfomance}
Now we focus in the important case of   the  Cressie-Read   divergence, which is denoted as $\phi^k_{cr}$ (\cite{Cressie_Read_1984}). We want to connect Theorem \ref{Robust_Fenchel} with an explicit  expression for the convex conjugate $\phi_{cr}^{*k}$.  From  Table \ref{table:examples} we know that  the Cressie-Read family is parameterized by $k \in$ $(-\infty, \infty) \backslash\{0,1\}$ and $k_{*}=\frac{k}{k-1}$ with
\begin{equation}\label{Cressie_Read_Divergences}
	\phi_{cr}^k(t)\triangleq\frac{t^{k}-k t+k-1}{k(k-1)} \quad\text {and}\quad \phi_{cr}^{*k}(s)\triangleq\frac{1}{k}\left[((k-1) s+1)_{+}^{k_{*}}-1\right] \text {. }
\end{equation}

For $t<0$  we let $\phi_{cr}^k(t)=+\infty$  and we define $\phi_{cr}^1$ and $\phi_{cr}^{0}$ as their respective limits as $k \rightarrow 0,1$. The family of divergences (\ref{Cressie_Read_Divergences}) includes $\chi^{2}$-divergence $(k=2)$, empirical likelihood $\phi_{cr}^0(t)=$ $-\log t+t-1$, and KL-divergence $\phi_{cr}^{1}(t)=t \log t-t+1$ as particular cases.
\smallskip

\begin{proposition}\label{Simplified_Dual1}Let Assumption \ref{RUM_conditions} hold and  let  $k \in(1, \infty), k_{*}=k /(k-1)$, any $\rho>0$, and $m_{k}(\rho):=(1+k(k-1) \rho)^{\frac{1}{k}}$. Then for all $u\in \mathcal{U}$ we have:
	\begin{equation}\label{Simplified_Dual}
		-W^{(DRO)^*}(p)=\inf _{u\in\mathcal{U}\mu \in \mathbb{R}}\left\{m_{k}(\rho) \mathbb{E}_{F}\left[(H(u , \varepsilon)-\mu)_{+}^{k_{*}}\right]^{\frac{1}{k_{*}}}-\mu\right\}.
	\end{equation}
\end{proposition}
\end{comment}

\bibliographystyle{nourl.bst} % We choose the "plain" reference style
\bibliography{Demand_refs}

\begin{thebibliography}{41}
\providecommand{\natexlab}[1]{#1}
\providecommand{\url}[1]{\texttt{#1}}
\expandafter\ifx\csname urlstyle\endcsname\relax


\bibitem[Arcidiacono and Miller(2011)]{Arcidiacono_Miller_2011}
Peter Arcidiacono and Robert~A. Miller.
\newblock Conditional choice probability estimation of dynamic discrete choice
  models with unobserved heterogeneity.
\newblock \emph{Econometrica}, 79\penalty0 (6):\penalty0 1823--1867, 2011.

\bibitem[Becker et~al.(1963)Becker, Degroot, and
  Marschak]{BeckerGordonDegroot1963}
Gordon~M. Becker, Morris~H. Degroot, and Jacob Marschak.
\newblock Stochastic models of choice behavior.
\newblock \emph{Behavioral Science}, 8\penalty0 (1):\penalty0 41--55, 1963.

\bibitem[Ben-Tal and Teboulle(1987)]{BentalTeboulle1987}
A.~Ben-Tal and M.~Teboulle.
\newblock Penalty functions and duality in stochastic programming via
  $\varphi$-divergence functionals.
\newblock \emph{Mathematics of Operations Research}, 12\penalty0 (2):\penalty0
  224--240, 1987.

\bibitem[Berry and Pakes(2007)]{BerryPakes2007}
Steven Berry and Ariel Pakes.
\newblock The pure characteristics demand model.
\newblock \emph{International Economic Review}, 48\penalty0 (4):\penalty0
  1193--1225, 2007.

\bibitem[Berry et~al.(1995)Berry, Levinsohn, and
  Pakes]{BerryLevinsohnPakes1995}
Steven Berry, James Levinsohn, and Ariel Pakes.
\newblock Automobile prices in market equilibrium.
\newblock \emph{Econometrica}, 63\penalty0 (4):\penalty0 841--890, 1995.

\bibitem[Berry(1994)]{Berry1994}
Steven~T. Berry.
\newblock Estimating discrete-choice models of product differentiation.
\newblock \emph{The RAND Journal of Economics}, 25\penalty0 (2):\penalty0
  242--262, 1994.
\newblock ISSN 07416261.

\bibitem[Block and Marschak(1959)]{BlockMarschak1959}
H.D. Block and Jacob Marschak.
\newblock Random orderings and stochastic theories of response.
\newblock Cowles Foundation Discussion Papers~66, Cowles Foundation for
  Research in Economics, Yale University, 1959.

\bibitem[Chiong et~al.(2016)Chiong, Galichon, and
  Shum]{Chiong_Galichon_Shum_2016}
Khai~Xiang Chiong, Alfred Galichon, and Matt Shum.
\newblock Duality in dynamic discrete-choice models.
\newblock \emph{Quantitative Economics}, 7\penalty0 (1):\penalty0 83--115,
  2016.

\bibitem[Christensen and Connault(2023)]{Christensen_Connault_2022}
Timothy Christensen and Benjamin Connault.
\newblock Counterfactual sensitivity and robustness.
\newblock \emph{Econometrica}, 91\penalty0 (1):\penalty0 263--298, 2023.

\bibitem[Csiszar(1967)]{Csiszar1967}
I.~Csiszar.
\newblock Information-type measures of difference of probability distributions
  and indirect observation.
\newblock \emph{Studia Scientiarum Mathematicarum Hungarica}, 2:\penalty0
  229--318, 1967.

\bibitem[Dacorogna and Mar{\'e}chal(2008)]{dacorogna2008role}
Bernard Dacorogna and Pierre Mar{\'e}chal.
\newblock The role of perspective functions in convexity, polyconvexity,
  rank-one convexity and separate convexity.
\newblock \emph{Journal of convex analysis}, 15\penalty0 (ARTICLE):\penalty0
  271--284, 2008.

\bibitem[Feng et~al.(2017)Feng, Li, and Wang]{Feng_et_al_opre.2017.1602}
Guiyun Feng, Xiaobo Li, and Zizhuo Wang.
\newblock Technical note—on the relation between several discrete choice
  models.
\newblock \emph{Operations Research}, 65\penalty0 (6):\penalty0 1516--1525,
  2017.

\bibitem[Fosgerau et~al.(2021)Fosgerau, Melo, Shum, and
  Sørensen]{Fosgerauetal2021109911}
Mogens Fosgerau, Emerson Melo, Matthew Shum, and Jesper R.-V. Sørensen.
\newblock Some remarks on ccp-based estimators of dynamic models.
\newblock \emph{Economics Letters}, 204:\penalty0 109911, 2021.
\newblock ISSN 0165-1765.

\bibitem[Fosgerau et~al.(2022)Fosgerau, Monardo, and
  de~Palma]{Fosgerauetal2022}
Mogens Fosgerau, Julien Monardo, and Andre de~Palma.
\newblock The inverse product differentiation logit model.
\newblock \emph{Working paper}, 2022.

\bibitem[Galichon and Salanié(2021)]{Galichon_Salanie_2021}
Alfred Galichon and Bernard Salanié.
\newblock {Cupid’s Invisible Hand: Social Surplus and Identification in
  Matching Models}.
\newblock \emph{The Review of Economic Studies}, 89\penalty0 (5):\penalty0
  2600--2629, 12 2021.
\newblock ISSN 0034-6527.

\bibitem[Hansen and Sargent(2001)]{Hansen_Sargent_2001}
Lars~Peter Hansen and Thomas~J. Sargent.
\newblock Robust control and model uncertainty.
\newblock \emph{The American Economic Review}, 91\penalty0 (2):\penalty0
  60--66, 2001.
\newblock ISSN 00028282.

\bibitem[Hansen and Sargent(2008)]{Hansen_Sargent2008}
Lars~Peter Hansen and Thomas~J. Sargent.
\newblock \emph{Robustness}.
\newblock Princeton University Press, stu - student edition edition, 2008.

\bibitem[Hiriart-Urruty and Lemarechal(1993)]{Hiriart_Urruty_Lemarechal_1993}
Jean-Baptiste Hiriart-Urruty and Claude Lemarechal.
\newblock \emph{Convex Analysis and Minimization Algorithms II}.
\newblock Addison-Wesley Professional, 2 edition, 1993.

\bibitem[Hotz and Miller(1993)]{Hotz_Miller_1993}
V.~Joseph Hotz and Robert~A. Miller.
\newblock {Conditional Choice Probabilities and the Estimation of Dynamic
  Models}.
\newblock \emph{The Review of Economic Studies}, 60\penalty0 (3):\penalty0
  497--529, 07 1993.
\newblock ISSN 0034-6527.

\bibitem[Hu and Hong(2012)]{Hu_Hong_2012}
Zhaolin Hu and L.~Jeff Hong.
\newblock Kullback-leibler divergence constrained distributionally robust
  optimization.
\newblock \emph{Working Paper}, 2012.

\bibitem[Kuhn et~al.(2019)Kuhn, Esfahani, Nguyen, and
  Shafieezadeh-Abadeh]{Kuhn_et_al_2019}
Daniel Kuhn, Peyman~Mohajerin Esfahani, Viet~Anh Nguyen, and Soroosh
  Shafieezadeh-Abadeh.
\newblock \emph{Wasserstein Distributionally Robust Optimization: Theory and
  Applications in Machine Learning}, chapter~6, pages 130--166.
\newblock 2019.

\bibitem[Li(2018)]{Li2018}
Lixiong Li.
\newblock A general method for demand inversion, 2018.

\bibitem[Liese and Vajda(1987)]{LieseVajda1987}
F.~Liese and I.~Vajda.
\newblock Convex statistical distances.
\newblock \emph{Leipzig: Teubner-Texte zur Mathematik, Band 95.}, 1987.

\bibitem[Marschak(1959)]{Marschak1959}
Jacob Marschak.
\newblock Binary choice constraints on random utility indicators.
\newblock Technical report, 1959.

\bibitem[McFadden(1978{\natexlab{a}})]{McFadden1978}
D.~McFadden.
\newblock Modeling the choice of residential location.
\newblock \emph{in A. Karlqvis, A., Lundqvist, L., Snickars, L., Weibull, J.
  (eds.), Spatial Intearction Theory and Planning Models (North Holland,
  Amsterdam)}, pages 531--551, 1978{\natexlab{a}}.

\bibitem[McFadden(1978{\natexlab{b}})]{gev}
D.~McFadden.
\newblock Modeling the choice of residential location.
\newblock \emph{Transportation Research Record}, \penalty0 (673):\penalty0
  72--77, 1978{\natexlab{b}}.

\bibitem[McFadden(1978{\natexlab{c}})]{mcffadens1}
D.~McFadden.
\newblock \emph{Spatial interaction theory and residential location}, chapter :
  Modeling the choice of residential location, pages 75--96.
\newblock North-Holland, Amsterdam, 1978{\natexlab{c}}.

\bibitem[McFadden(1981)]{mcf1}
D.~McFadden.
\newblock \emph{Structural Analysis of Discrete Data with Econometric
  Applications}, chapter : Econometric Models of Probabilistic Choice, pages
  198--272.
\newblock Cambridge: MIT, 1981.

\bibitem[McFadden(2001)]{McFadden2001}
Daniel McFadden.
\newblock Economic choices.
\newblock \emph{American Economic Review}, 91\penalty0 (3):\penalty0 351--378,
  June 2001.

\bibitem[Mishra et~al.(2012)Mishra, Natarajan, Tao, and Teo]{Mishra_et_al_2012}
Vinit~Kumar Mishra, Karthik Natarajan, Hua Tao, and Chung-Piaw Teo.
\newblock Choice prediction with semidefinite optimization when utilities are
  correlated.
\newblock \emph{IEEE Transactions on Automatic Control}, 57\penalty0
  (10):\penalty0 2450--2463, 2012.

\bibitem[Mishra et~al.(2014)Mishra, Natarajan, Padmanabhan, Teo, and
  Li]{Mishra_et_al_2014}
Vinit~Kumar Mishra, Karthik Natarajan, Dhanesh Padmanabhan, Chung-Piaw Teo, and
  Xiaobo Li.
\newblock On theoretical and empirical aspects of marginal distribution choice
  models.
\newblock \emph{Management Science}, 60\penalty0 (6):\penalty0 1511--1531,
  2014.

\bibitem[M{\"u}ller et~al.(2022)M{\"u}ller, Nesterov, and
  Shikhman]{muller2022discrete}
David M{\"u}ller, Yurii Nesterov, and Vladimir Shikhman.
\newblock Discrete choice prox-functions on the simplex.
\newblock \emph{Mathematics of Operations Research}, 47\penalty0 (1):\penalty0
  485--507, 2022.

\bibitem[Natarajan et~al.(2009)Natarajan, Song, and Teo]{Natarajan_et_al_2009}
Karthik Natarajan, Miao Song, and Chung-Piaw Teo.
\newblock Persistency model and its applications in choice modeling.
\newblock \emph{Management Science}, 55\penalty0 (3):\penalty0 453--469, 2009.

\bibitem[Pardo(2005)]{Pardo2005}
L.~Pardo.
\newblock \emph{Statistical Inference Based on Divergence Measures}.
\newblock Chapman and Hall/CRC, 1 edition, 2005.

\bibitem[Rockafellar(1976)]{Rockafellar_1976}
R.~Tyrrell Rockafellar.
\newblock Integral functionals, normal integrands and measurable selections.
\newblock In Jean~Pierre Gossez, Enrique~Jos{\'e} Lami~Dozo, Jean Mawhin, and
  Lucien Waelbroeck, editors, \emph{Nonlinear Operators and the Calculus of
  Variations}, pages 157--207, Berlin, Heidelberg, 1976. Springer Berlin
  Heidelberg.
\newblock ISBN 978-3-540-38075-7.

\bibitem[Rockafellar(1970)]{Rockafellar1970}
T.~R. Rockafellar.
\newblock \emph{Convex Analysis}.
\newblock 1970.

\bibitem[Ruszczyński and Shapiro(2021)]{Shapiro_et_al_2013}
Andrzej Ruszczyński and Alexander Shapiro.
\newblock \emph{Lectures on Stochastic Programming: Modeling and Theory, Third
  Edition}.
\newblock 2021.

\bibitem[Scarf(1958)]{Scarf1958}
H.~Scarf.
\newblock A min-max solution of an inventory problem.
\newblock \emph{Studies in The Mathematical Theory of Inventory and
  Production}, 1958.

\bibitem[Shapiro(2017{\natexlab{a}})]{Shapiro}
Alexander Shapiro.
\newblock Distributionally robust stochastic programming.
\newblock \emph{SIAM Journal on Optimization}, 27\penalty0 (4):\penalty0
  2258--2275, 2017{\natexlab{a}}.

\bibitem[Shapiro(2017{\natexlab{b}})]{Shapiro2017}
Alexander Shapiro.
\newblock Distributionally robust stochastic programming.
\newblock \emph{SIAM Journal on Optimization}, 27\penalty0 (4):\penalty0
  2258--2275, 2017{\natexlab{b}}.

\bibitem[Train(2009)]{train_2009}
Kenneth~E. Train.
\newblock \emph{Discrete Choice Methods with Simulation}.
\newblock Cambridge University Press, 2 edition, 2009.

\end{thebibliography}
\newpage
\end{document}